\numberwithin{equation}{section}
\newtheorem{lemma}[equation]{Lemma}
\newtheorem{proposition}[equation]{Proposition}
\newtheorem{theorem}[equation]{Theorem}
\theoremstyle{definition}
\newtheorem{definition}[equation]{Definition}
\newtheorem{example}[equation]{Example}
\newtheorem{corollary}[equation]{Corollary}
\newtheorem{remark}[equation]{Remark}
\newcommand{\mb}[1]{{\mathbf #1}}
\newcommand{\mc}[1]{{\mathcal #1}}
\DeclareMathOperator*{\Coast}{\raisebox{-0.6ex}{\scalebox{2.5}{$\ast$}}}
\title{Connectivity of spaces of directed paths in geometric models for concurrent computation}
\author{Martin Raussen} 
\address{Department of
  Mathematical Sciences, Aalborg University, Skjernvej 4A,
  DK-9220 Aalborg {\O}st, Denmark} 
\email{raussen@math.aau.dk}
\thanks{Thanks to the anonymous referees for many hints leading to various improvements of the presentation.}
\begin{document}
\begin{abstract}
Higher Dimensional Automata (HDA) are higher dimensional relatives to transition systems in concurrency theory taking into account to which degree various actions commute. Mathematically, they take the form of labelled cubical complexes. It is important to know, and challenging from a geometric/topological perspective, whether the space of directed paths (executions in the model) between two vertices (states) is connected; more generally, to estimate higher connectivity of these path spaces.

This paper presents an approach for such an estimation for particularly simple HDA arising from PV programs and modelling the access of a number of processors to a number of resources with given limited capacity each. It defines the spare capacity of a concurrent program with prescribed periods of access of the processors to the resources using only the syntax of individual programs and the capacities of shared resources. It shows that the connectivity of spaces of directed paths can be estimated (from above) by spare capacities. Moreover, spare capacities can also be used to detect deadlocks and critical states in such a simple HDA.

The key theoretical ingredient is a transition from the calculation of local connectivity bounds (of the upper links of vertices of an HDA) to global ones by applying a version of the nerve lemma due to Anders Bj\"{o}rner. 
\end{abstract}
\keywords{Higher Dimensional Automata, directed path, spare capacity, connectivity, nerve lemma, deadlock, critical state}

\maketitle
\section{PV programs and their geometric semantics}
\subsection{Euclidean cubical complexes and path spaces}\label{ss:Euclid}
In this paper, we consider an old and simple model for concurrent computation, the so-called PV programs going back to Dijkstra \cite{Dijkstra:68}: In the simplest situation, consider a set $P$ of $n$ processors $j\in P$ each executing a linear program. During execution, a processor may lock ($Pr$) and relinquish ($Vr$) one or several resources $r$ from a pool $R$ of resources; possibly several times consecutively. Any resource $r\in R$ has capacity to serve up to $\kappa (r)\in\mb{N}$ of these processors at any given time.

Every execution of a linear program for a given processor corresponds to a directed map $p_j: I=[0,1]\to I_j$ where $I_j$ is a finite closed interval; and directed means continuous and (not necessarily strictly) increasing. Disregarding conflicting locks in the first place, any concurrent execution corresponds to a (componentwise) directed path $p=(p_j):I\to \prod_{j=1}^nI_j\subseteq\mb{R}^n$. 

Lock and unlock commands occur consecutively along the intervals $I_j$, eg at integer values. They give rise to the \emph{forbidden region} $F\subset I_j\subseteq\mb{R}^n$ which is composed of products of subintervals representing states at which at least one of the resources is locked by more processors than its capacity allows. Relevant directed paths (aka d-paths in the literature) are not allowed to enter the forbidden region $F$. 
The ``allowed'' \emph{state space} $X\subset \mb{R}^n$ for such a PV program $\mc{P}$ is the complement of the forbidden region $F$ within $\prod_{j=1}^nI_j$. Our aim is to analyse properties of the space of directed paths in this state space (given a source and a target). For details cf Section \ref{ss:forbidden} or the more comprehensive description in Fajstrup etal. \cite[ch.\ 3]{FGHMR:16}. 

This state space is a very particular simple case of a pre-cubical set or rather its geometric realization (cf \cite[ch.\ 3.4]{FGHMR:16}, Ziemia\'{n}ski \cite[Sect.\ 1]{Ziemianski:17}, \cite[Sect.\ 2.4]{Ziemianski:20}) underlying general Higher Dimensional Automata introduced by Pratt \cite{Pratt} and analysed by van Glabbeek, cf eg \cite{Glabbeek:05}). It has a natural embedding into $\mb{R}^n$ (a Euclidean cubical complex); as such, it is non-self-linked and proper, cf Ziemia\'{n}ski \cite[Sect.\ 1]{Ziemianski:17}.

The space of all executions from an initial state $s$ to a final state $t$ in state space $X$ corresponds to the space $\vec{P}(X)_s^t$ of all \emph{directed} paths $p:I\to X$ with $p(0)=s$ and $p(1)=t$; equipped with the compact-open topology (uniform convergence). Various simplicial models of such spaces have (in far more general situations) been described in the literature (in particular Raussen \cite{Raussen:10, Raussen:12}, Fajstrup etal \cite {FGHMR:16} and Ziemia\'{n}ski \cite{Ziemianski:17,Ziemianski:20}).  The most notable result (Ziemia\'{n}ski \cite[Theorem 5.9]{Ziemianski:16}) reports that they can be arbitrarily complicated: For every finite simplicial complex $C$ there exists a PV program with a state space $X$ and vertices $s$ and $t$ such that the path space $\vec{P}(X)_s^t$ has a connected component that is homotopy equivalent to that complex. 

On the positive side, it was shown that, roughly speaking, path connectivity locally everywhere implies global path connectivity of all path spaces $\vec{P}(X)_s^t$ (cf Raussen \cite[Prop.\ 2.18]{Raussen:00}, Belton etal \cite[Theorem 2]{Beltonetal:19}). This is important: That the path space is connected means that any two directed paths are d-homotopic (homotopic through a one-parameter family of directed paths), and hence any two executions are equivalent and yield the same result (cf \cite {FGHMR:16} for details).

In this paper, we define the \emph{spare capacity of a PV program} $\mc{P}$ on a given resource pool. We show (Theorem \ref{prop:PVconn}) that it can be used as an estimate for the connectivity of spaces of concurrent executions corresponding to directed paths in the associated state space. If this spare capacity is at least $2$, the execution spaces are path-connected; if it is less than $2$, then non-equivalent executions may arise.

\subsection{A short overview} 
\subsubsection{The main lines}
The present paper yields a quite simple numerical criterion ensuring local connectivity of path spaces which, by an inductive argument using machinery from combinatorial topology (cf Bj\"{o}rner \cite[Theorem 6]{Bjoerner:03}), implies global connectivity. We call the number in question  the \emph{spare capacity} of the program, cf Definition \ref{def:cumcap}. The spare capacity allows to estimate not only path connectivity but also \emph{higher connectivity} of spaces of directed paths between vertices in the associated state space.

More precisely, take departure in a PV-program $\mc{P}$ on $n$ processors $j\in P$ with several shared resources $r$ within a pool $R$ of resources, each with a capacity $\kappa(r)\in\mb{N}$; meaning that up to $\kappa (r)$ processors may use resource $r$ at any given time. We determine, by a simple calculation, the spare capacity $\kappa(X)$ of the associated state space $X$, a numerical invariant that depends on the capacities of the resources and on the intersection patterns of intervals on which these resources are jointly called by the processors. We show (Theorem \ref{prop:PVconn}) that the spare capacity allows to determine the minimal (higher) connectivity of path spaces $\vec{P}(X)_s^t$ (with a fixed target $t$ that is reachable from a variable source $s$): There exists a vertex $s$ such that path space $\vec{P}(X)_s^t$ is exactly $(\kappa(X)-2)$-connected, and for all other vertices $s'$,  $\vec{P}(X)_{s'}^t$ is at least $(\kappa(X)-2)$-connected. 

In particular, if this invariant $\kappa(X)\ge 2$, it is guaranteed that all relevant (non-empty) path spaces are path-connected; ie any two directed paths from a source $s$ to a target $t$ are d-homotopic. That means that all executions of the concurrent program $X$ (same source, same target, same individual execution along each thread, in particular same number of loops traversed) yield the \emph{same result} regardless the order of accesses to shared resources. Such a program can thus \emph{not} be used to solve a decision problem (cf eg Herlihy etal \cite{HKR:14}). On the other hand, if $\kappa(X)=1$, then there exists a vertex and directed paths starting from that vertex which are not dihomotopic. Hence, corresponding executions may lead to \emph{different results}; such a program might then be useful for solving a decision problem. 

Throughout most of this paper, we stick to concurrent executions of \emph{linear} programs. This might seem very restrictive and unrealistic; processors usually execute programs on \emph{directed graphs}, allowing branchings and loops. The space of all such executions between a source and a target decomposes (by unfolding) into a disjoint union of spaces of executions along the linear directed paths within such a directed graph. As a consequence, the space of all executions in a concurrent program splits into spaces of executions along $n$-tuples of such linear directed paths; for these linear subspaces, the methods developed below apply. For details, consult Section \ref{ss:braloo}.

\subsubsection{Content of the paper in more detail}
In Section 2, we describe the point of departure: Given a number of processors, each following a linear program and participating in a concurrent program with shared resources. Each resource has a given capacity and may be locked and relinquished sequentially by every processor executing its program. The geometric semantics corresponds to a state space in the form of a Euclidean cubical complex, a subcomplex of a cubical subdivision of $\mb{R}^n$. Executions correspond to \emph{directed} paths from a source vertex to a target vertex. The aim is to study the \emph{space} of all such executions as a topological space; in particular to determine its connectivity. This section recapitulates in essence the point of departure in Fajstrup etal \cite[ch.\ 3]{FGHMR:16}.

Section 3 focusses on the \emph{local} behaviour of the state space. As already explained by Ziemia\'{n}ski \cite{Ziemianski:16} and Belton etal \cite{Beltonetal:19,Beltonetal:21}, the key information is the topology (in particular, the connectivity) of the \emph{future links} (or past links) of vertices in the state space. It turns out (Proposition \ref{prop:linkjoin}) that these future links are \emph{joins} (aka convex combinations) of \emph{skeleta of simplices}. This observation lets us determine the connectivity of the future link of a vertex $v$ in terms of a spare capacity $\kappa (X;v)$ \emph{defined at that vertex} $v$; this spare capacity can be calculated from the syntax of the individual programs by a combinatorial formula; cf Definition \ref{def:cc}. Roughly speaking, the spare capacity at a vertex $v$ expresses the number of processors that can proceed from $v$ concurrently.

In Section 4, we define the \emph{spare capacity} $\kappa (X)$ of a concurrent program $\mc{P}$ (or its state space $X$) as the \emph{minimum} of the spare capacities of all its reachable vertices; cf Definition \ref{def:cumcap}. Bj\"{o}rner's version of the nerve theorem \cite[Theorem 6]{Bjoerner:03} is then applied to conclude that the connectivity of the state space is bounded below by the connectivities of the future links of all in-between vertices, and hence by the spare capacity of the program, cf Theorem \ref{prop:PVconn}: If $\kappa (X)\ge 2$, then all directed paths from a start vertex to a target vertex in the state space are dihomotopic to each other; corresponding executions will always lead to the same result. In particular cases, the spare capacity bound is tight, cf Proposition \ref{prop:mini}. 

The overall result holds also for concurrent programs  consisting of individual non-linear individual programs modelled on a general digraph and its unfoldings. Section 4  contains, moreover, reflections on what happens to spare capacities (and thus connectivities) if processors are allowed to crash, cf Proposition \ref{prop:vertexcrash}.

The final section is devoted to a sketch of algorithmic aspects concerning the calculation of this spare capacity. Particular care is devoted to deadlock detection -- corresponding to spare capacity $0$ at a vertex -- extending the results of Fajstrup etal \cite{FGR:98}. Moreover to vertices with spare capacity $1$ indicating potential ``splits'' of the space of executions into several path components. Throughout, simple examples and illustrations motivate the strategy.

\section{Forbidden region and state space}
Throughout the paper $[m:n]$ denotes the set of integers between integers $m$ and $n$, whereas $[a,b]$, resp.\ $]a,b[$ denote the closed, resp.\ open intervals between real numbers $a$ and $b$.
\subsection{Resource consumption}\label{ss:consump}
Let $R$ denote the set of resources and $P=[1:n]$ the set of processors.
\subsubsection{One processor} A $PV$ command line for a single processor $j\in P$ (we take only account of the lock and release commands, not of the calculations taking part inbetween) indicates at which places processor $j$ requires a lock to $r$ (by issuing $Pr$), resp.\ relinquishes it (by issuing $Vr$); cf \cite{Dijkstra:68}. It can be encoded by a total of $2|R|$ functions\\ $Pr_j, Vr_j:[1:k_j(r)]\to [1:l(j)],\; r\in R$, such that 
\begin{itemize}
\item $Pr_j(i)<Vr_j(i),\; i\le k_j(r)$, and $Vr_j(i)<Pr_j(i+1),\; i<k_j(r)$, and \item $\bigcup_{r\in R, i\in [1:k_j(r)]}\{ Pr_j(i), Vr_j(i)\}=[1:l(j)]$.
\end{itemize}
Above, the number of lock and relase commands to a specific resource $r$ is denoted $k_j(r)$. It is allowed that $k_j(r)=0$, ie that some resources are not called upon by processor $j$. The total number of lock and release commands issued by $j$, to all resources $r\in R$, is denoted $l(j)$.  

\begin{definition}\label{consumption}
\begin{enumerate}
\item For $r\in R$ and $j\in P$, let $cr_j: I_J:=[0,l(j)+1]\to\{ 0,1\}$ denote the characteristic function of the subset $\bigcup_{i\in [1:k_j(r)]}]Pr_j(i),Vr_j(i)[\cup \{ l(j)+1\}$ indicating whether $j$ has a lock to $r$ or not (or has arrived at the final state). The characteristic functions for all $r\in R$ assemble to a binary valued \emph{consumption vector function} $c_j: [0,l(j)+1]\to\{ 0,1\}^R$. 

\item Furthermore, we let $dr_j: [0:l(j)+1]\to\{ -1,0,1\}$ denote the difference of the characteristic functions of the two integer sets\\ $\{ Pr_j(i)|\; 1\le i\le k_j(r)\}$ and $\{ Vr_j(i)|\; 1\le i\le k_j(r)\}$. In more detail, $d r_j(k)=\pm 1$ if $k=Pr_j(i)$, resp.\ $k=Vr_j(i)$ for some $i\in [1:k_j(r)]$, and $0$ else. These functions assemble to a vector function $d_j: [0:l(j)+1]\to\{ -1,0,1\}^R$ encompassing changes to locks to resources due to processor $j$.
\end{enumerate}  
\end{definition}

\begin{remark}\label{rem:cd}
\begin{enumerate}
\item Remark, that the intervals in the definition of the consumption function are open! 
\item The two functions in Definition \ref{consumption} are linked by \[c_j(i+t)=c_j(i)+dr_j(i),\; i\in [1:l(j)-1], t\in ]0,1[.\]
\end{enumerate}
\end{remark}

\subsubsection{Several processors}\label{sss:several} Information regarding consumption of resources by \emph{all} processors $j\in P$ is encoded by functions on $\prod_{j\in P}[0,l(j)+1)]\subset\mb{R}^P$: 

\begin{definition}\label{def:condiff}
The total \emph{consumption vector function}  
$\mb{c}:\prod_{j\in P}[0,l(j)+1)]\to (\mb{N}_{\ge 0})^R$ and the total \emph{difference vector function} $\mb{d}:\prod_{j\in P}[0:l(j)+1)]\to [-n:n]^R$  are defined by
\begin{align*}
 \mb{c}(x_1,\dots ,x_n)  &= \sum_{j\in P}c_j(x_j)\\
\mb{d}(x_1,\dots ,x_n)  &= \sum_{j\in P}d_j(x_j).
 \end{align*}
 \end{definition}

The consumption function measures how many locks to resources $r\in R$ have been acquired at $(x_1,\dots ,x_n)$. They have component functions $cr:\prod_{j\in P}[0,l(j)+1)]\to (\mb{N}_{\ge 0})$, resp.\ $dr:\prod_{j\in P}[0:l(j)+1)]\to [-n:n]$ for every $r\in R$. 

How does the consumption function change when proceeding from an integer vertex $v=(i_1,\dots ,i_n)\in \prod_{j\in P}[0:l(j)]\cap\mb{Z}^n$? From Remark \ref{rem:cd}(2), we conclude that
 \[cr (i_1+t_1,\dots ,i_n+t_n)=cr(i_1,\dots ,i_n)+\sum_{t_j>0}dr_j(i_j),\; 0\le t_j<1.\] This last sum encodes the difference between the number of requests to and the number of releases to $r$ at the vertex $v=(i_1,\dots ,i_n)$ on a given set of processors (those $j$ with $t_j>0)$.

\begin{example}\label{ex:cons}
\begin{enumerate}
\item The iconic Swiss flag example concerns a concurrent program on two processors $P:=\{1, 2\}$ with programs sharing two resources $a$ and $b$, both with capacity $1$ (mutual exclusion), called upon as $PaPbVbVa$ by $1$ resp.\ $PbPaVaVb$ by $2$. Figure \ref{fig:Swiss} illustrates the associated consumption function $\mb{c}$ in the interior of squares, the difference functions $d_1,d_2$ (Definition \ref{def:condiff}; $\mb{d}(v_1,v_2)=d_1(v_1)+d_2(v_2)$ at vertices $v=(v_1,v_2)$), the forbidden region $F$ (cf Section \ref{ss:forbidden}; in pink) and the state space $X$ as its complement. 

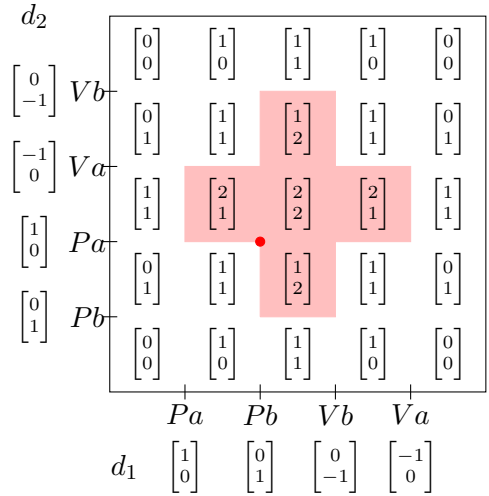
\begin{figure}[h]
\begin{tikzpicture}
\draw (0,0) rectangle (5,5);
\draw (1,-0.1) -- (1,0.1);
\draw (2,-0.1) -- (2,0.1);
\draw (3,-0.1) -- (3,0.1);
\draw (4,-0.1) -- (4,0.1);
\draw (-0.1,1) -- (0.1,1);
\draw (-0.1,2) -- (0.1,2);
\draw (-0.1,3) -- (0.1,3);
\draw (-0.1,4) -- (0.1,4);
\node at (1,-0.3) {$Pa$};
\node at (2,-0.3) {$Pb$};
\node at (3,-0.3) {$Vb$};
\node at (4,-0.3) {$Va$};
\node at (-0.3,1) {$Pb$};
\node at (-0.3,2) {$Pa$};
\node at (-0.3,3) {$Va$};
\node at (-0.3,4) {$Vb$};
\draw[fill,color=pink] (2,1) -- (3,1) -- (3,2) -- (4,2) -- (4,3) -- (3,3) -- (3,4) -- (2,4) -- (2,3) -- (1,3) -- (1,2) -- (2,2) -- (2,1);
\node at (0.5,0.5) {$\tiny{\begin{bmatrix} 0 \\ 0\end{bmatrix}}$};
\node at (1.5,0.5) {$\tiny{\begin{bmatrix} 1 \\ 0\end{bmatrix}}$};
\node at (2.5,0.5) {$\tiny{\begin{bmatrix} 1 \\ 1\end{bmatrix}}$};
\node at (3.5,0.5) {$\tiny{\begin{bmatrix} 1 \\ 0\end{bmatrix}}$};
\node at (4.5,0.5) {$\tiny{\begin{bmatrix} 0 \\ 0\end{bmatrix}}$};
\node at (0.5,1.5) {$\tiny{\begin{bmatrix} 0 \\ 1\end{bmatrix}}$};
\node at (1.5,1.5) {$\tiny{\begin{bmatrix} 1 \\ 1\end{bmatrix}}$};
\node at (2.5,1.5) {$\tiny{\begin{bmatrix} 1 \\ 2\end{bmatrix}}$};
\node at (3.5,1.5) {$\tiny{\begin{bmatrix} 1 \\ 1\end{bmatrix}}$};
\node at (4.5,1.5) {$\tiny{\begin{bmatrix} 0 \\ 1\end{bmatrix}}$};
\node at (0.5,2.5) {$\tiny{\begin{bmatrix} 1 \\ 1\end{bmatrix}}$};
\node at (1.5,2.5) {$\tiny{\begin{bmatrix} 2 \\ 1\end{bmatrix}}$};
\node at (2.5,2.5) {$\tiny{\begin{bmatrix} 2 \\ 2\end{bmatrix}}$};
\node at (3.5,2.5) {$\tiny{\begin{bmatrix} 2 \\ 1\end{bmatrix}}$};
\node at (4.5,2.5) {$\tiny{\begin{bmatrix} 1 \\ 1\end{bmatrix}}$};
\node at (0.5,3.5) {$\tiny{\begin{bmatrix} 0 \\ 1\end{bmatrix}}$};
\node at (1.5,3.5) {$\tiny{\begin{bmatrix} 1 \\ 1\end{bmatrix}}$};
\node at (2.5,3.5) {$\tiny{\begin{bmatrix} 1 \\ 2\end{bmatrix}}$};
\node at (3.5,3.5) {$\tiny{\begin{bmatrix} 1 \\ 1\end{bmatrix}}$};
\node at (4.5,3.5) {$\tiny{\begin{bmatrix} 0 \\ 1\end{bmatrix}}$};
\node at (0.5,4.5) {$\tiny{\begin{bmatrix} 0 \\ 0\end{bmatrix}}$};
\node at (1.5,4.5) {$\tiny{\begin{bmatrix} 1 \\ 0\end{bmatrix}}$};
\node at (2.5,4.5) {$\tiny{\begin{bmatrix} 1 \\ 1\end{bmatrix}}$};
\node at (3.5,4.5) {$\tiny{\begin{bmatrix} 1 \\ 0\end{bmatrix}}$};
\node at (4.5,4.5) {$\tiny{\begin{bmatrix} 0 \\ 0\end{bmatrix}}$};
\node at (0.2,-1) {$\tiny{d_1}$};
\node at (1,-1) {$\tiny{\begin{bmatrix}1\\ 0\end{bmatrix}}$};
\node at (2,-1) {$\tiny{\begin{bmatrix}0\\ 1\end{bmatrix}}$};
\node at (3,-1) {$\tiny{\begin{bmatrix}0\\ -1\end{bmatrix}}$};
\node at (4,-1) {$\tiny{\begin{bmatrix}-1\\ 0\end{bmatrix}}$};
\node at (-1,5) {$\tiny{d_2}$};
\node at (-1,4) {$\tiny{\begin{bmatrix}0\\ -1\end{bmatrix}}$};
\node at (-1,3) {$\tiny{\begin{bmatrix}-1\\ 0\end{bmatrix}}$};
\node at (-1,2) {$\tiny{\begin{bmatrix}1\\ 0\end{bmatrix}}$};
\node at (-1,1) {$\tiny{\begin{bmatrix}0\\ 1\end{bmatrix}}$};
\draw[fill,color=red] (2,2) circle (0.06cm);
\end{tikzpicture}
\caption{Swiss flag with associated consumption function: forbidden region in pink. The red vertex represents a deadlock: no non-trivial directed path in the state space starts from that vertex.}
\end{figure}\label{fig:Swiss}
 \item 
Consider the case of three processors $1,2,3\in P$ each of which executes a $PV$ program $PrVrPrVr$ on one resource $r$. The consumption function $cr$ takes values $cr(x_1,x_2,x_3)=0,1,2,3$ depending on how many of the coordinates are properly sandwiched between an odd and an even integer. In particular $cr(v)=0$ at \emph{every vertex} $v$. At a vertex $v=(i_1,i_2,i_3)$, we have $d_j(i_j)=1$ if $i_j$ is odd, $d_j(i_j)=-1$ if $i_j$ is even, and $d_j(i_j)=0$ if $i_j=0$ or $i_j=l(j)+1$. The associated forbidden regions (Section \ref{ss:forbidden}) for $\kappa (r)= 1$, resp.\ $\kappa (r)=2$ is illustrated in Figure \ref{fig:PV}.
\item Consider two resources $r, s$ of capacity three each, and four processors $i\in [1:4]$. 
\begin{description}
\item[A] Processors $i, i\le 3,$ start with $PrPs$ and processor $4$ starts with $PsPr$. At the vertex $v=(2,2,2,2)\in X_0^P$ with the final lock requests, $cr(v)=3, cs(v)=1, dr(v)=1, ds(v)=3$. 
\item[B] $1,2$ start with $PrPs$ and $3,4$ starts with $PsPr$. At the vertex $v=(2,2,2,2)$, $cr(v)=cs(v)=dr(v)=ds(v)=2$.
\end{description}
\item Now add a fifth processor $0$ starting with $PrPs$. 
\begin{description}
\item[A] The vertex $w=(2,2,2,2,2)$ is forbidden, cf Section \ref{ss:forbidden}: $cr(w)=4>3=\kappa (r)$.
\item[B] In this case, $cr(w)=3, dr(w)=2, cs(w)=2, ds(w)=3$. 
\end{description}
\end{enumerate}
\end{example}

\subsection{Forbidden region. State space}\label{ss:forbidden}
\subsubsection{A single shared resource}\label{sss:one}
We start by considering the case of a concurrent program in which processors in a set $P:=\{ 1,\dots ,n\}$ compete for a \emph{single} resource $r$ with capacity $\kappa (r)$ called upon (often several times) by programs each of the form $(PrVr)^{k_j},\; 1\le j\le n, k_j\ge 0$. The functions $Pr_j, Vr_j: [1:k_j]\to [1:l(j)=2k_j]$ are then given by $Pr_j(i)=2i-1$ and $Vr(i)=2i,\; 1\le i\le k_j$. The case $k_j=0$ takes take care of processors that do not call on $r$ at all. Let $N(r)\subseteq [1:n]$ denote the subset of processors with the property: $j\in N(r)\Leftrightarrow k_j>0$, ie processor $j$ calls upon $r$ at least once.

The corresponding \emph{forbidden} region $F(r)$, expressing that consumption exceeds the capacity of $r$, is defined as 
\[F(r):=\{ \mb{x}=(x_1,\dots ,x_n)\in \prod_{j\in P}[0,l(j)+1)]|\; cr(\mb{x})>\kappa:=\kappa (r)\}.\]
 It can be described as
a union of subsets, each a product of intervals, and enumerated as follows: Consider any injection $i: [1:\kappa +1]\hookrightarrow N(r)\subseteq [1:n]$ (ie a choice of $\kappa +1$ active processors) and the dual projection $i^*:\mb{R}^n\to\mb{R}^{\kappa +1}$. For every such injection consider all $(\kappa +1)$-tuples $\mb{l}:=(j_1,\dots , j_{\kappa +1})$ such that $0<j_m\le j(i(r)(m)), \; 1\le m\le \kappa +1$, enumerating all combinations of  ``lock intervals'' for the choice of processors given by $i$. For each combined choice $(i,\mb{l})$, let $F(i,\mb{l}):=(i^*)^{-1}(\prod_{m=1}^{\kappa +1}]Pr_{i(m)}(j_m),Vr_{i(m)}(j_m)[)$; a product that has $(n-\kappa -1)$ factors consisting of an entire interval $[0,l_j+1]$ corresponding to every $ j\not\in i(r)([1:\kappa +1])$.\\ The entire forbidden region is then $F(r)=\bigcup_{(i,\mb{l})}F(i,\mb{l})$. Figure \ref{fig:PV} shows the forbidden regions associated to Example \ref{ex:cons}(2).

The \emph{state space} $X(r)$ is the complement of the forbidden region: 
\[X(r):=\prod_{j=1}^n[0, l(k_j)+1]\setminus F(r).\] 

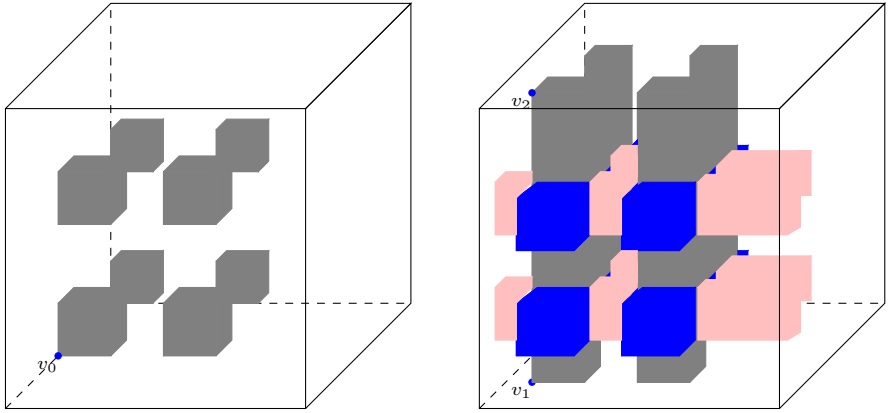
\begin{figure}[h]
\begin{tikzpicture}[scale=0.7]
\draw (0,0) rectangle (5.7,5.7);
\draw (0,5.7) -- (2,7.7) -- (7.7,7.7) -- (5.7,5.7);
\draw (5.7,0) -- (7.7,2) -- (7.7,7.7) -- (5.7,5.7);
\draw[dashed] (0,0) -- (1,1);
\draw[dashed] (2,2) -- (7.7,2);
\draw[dashed] (2,2) -- (2,7.7);
\draw[fill,color=gray] (1,1) rectangle (2,2);
\draw[fill,color=gray] (1,2) -- (1.3,2.3) -- (2.3,2.3) -- (2,2) -- (1,2);
\draw[fill,color=gray] (2,1) -- (2,2) -- (2.3,2.3) -- (2.3,1.3) -- (2,1);
\draw[fill,color=gray] (3,1) rectangle (4,2);
\draw[fill,color=gray] (3,2) -- (3.3,2.3) -- (4.3,2.3) -- (4,2) -- (3,2);
\draw[fill,color=gray] (4,1) -- (4,2) -- (4.3,2.3) -- (4.3,1.3) -- (4,1);
\draw[fill,color=gray] (2,2) rectangle (2.8,2.8);
\draw[fill,color=gray] (2,2.8) -- (2.2,3) -- (3,3) -- (2.8,2) -- (2,2.8);
\draw[fill,color=gray] (2.8,2) -- (2.8,2.8) -- (3,3) -- (3,2.2) -- (2.8,2);
\draw[fill,color=gray] (4,2) rectangle (4.8,2.8);
\draw[fill,color=gray] (4,2.8) -- (4.2,3) -- (5,3) -- (4.8,2) -- (4,2.8);
\draw[fill,color=gray] (4.8,2) -- (4.8,2.8) -- (5,3) -- (5,2.2) -- (4.8,2);
\draw[fill,color=gray] (1,3.5) rectangle (2,4.5);
\draw[fill,color=gray] (1,4.5) -- (1.3,4.8) -- (2.3,4.8) -- (2,4.5) -- (1,4.5);
\draw[fill,color=gray](2,3.5) -- (2,4.5) -- (2.3,4.8) -- (2.3,3.8) -- (2,3.5);
\draw[fill,color=gray] (3,3.5) rectangle (4,4.5);
\draw[fill,color=gray] (3,4.5) -- (3.3,4.8) -- (4.3,4.8) -- (4,4.5) -- (3,4.5);
\draw[fill,color=gray] (4,3.5) -- (4,4.5) -- (4.3,4.8) -- (4.3,3.8) -- (4,3.5);
\draw[fill,color=gray] (2,4.5) rectangle (2.8,5.3);
\draw[fill,color=gray] (2,5.3) -- (2.2,5.5) -- (3,5.5) -- (2.8,4.5) -- (2,5.3);
\draw[fill,color=gray] (2.8,4.5) -- (2.8,5.3) -- (3,5.5) -- (3,4.7) -- (2.8,4.5);
\draw[fill,color=gray] (4,4.5) rectangle (4.8,5.3);
\draw[fill,color=gray] (4,5.3) -- (4.2,5.5) -- (5,5.5) -- (4.8,4.5) -- (4,5.3);
\draw[fill,color=gray] (4.8,4.5) -- (4.8,5.3) -- (5,5.5) -- (5,4.7) -- (4.8,4.5);
\draw[fill,color=blue] (1,1) circle (0.06cm);
\draw[fill,color=blue] (10,0.5) circle (0.06cm);
\node at (0.8,0.8) {\tiny{$v_0$}};
\node at (9.8,0.3) {\tiny{$v_1$}};
\draw[dashed] (9,0) -- (10,1);
\draw[dashed] (11,2) -- (16.7,2);
\draw[dashed] (11,2) -- (11,7.7);
\draw[fill,color=gray] (10,0.5) rectangle (11,6);
\draw[fill,color=gray] (10,6) -- (10.3,6.3) -- (11.3,6.3) -- (11,6) -- (10,6);
\draw[fill,color=gray] (11,0.5) -- (11.3,0.8) -- (11.3,6.3) -- (11,6) -- (11,0.5); 
\draw[fill,color=gray] (12,0.5) rectangle (13,6);
\draw[fill,color=gray] (12,6) -- (12.3,6.3) -- (13.3,6.3) -- (13,6) -- (12,6);
\draw[fill,color=gray] (13,0.5) -- (13.3,0.8) -- (13.3,6.3) -- (13,6) -- (13,0.5); 
\draw[fill,color=gray] (11,1.5) rectangle (11.7,6.7);
\draw[fill,color=gray] (11,6.7) -- (11.2,6.9) -- (11.9,6.9) -- (11.7,6.7) -- (11,6.7);
\draw[fill,color=gray] (11.7,1.5) -- (11.9,1.7) -- (11.9,6.9) -- (11.7,6.7) -- (11.7,1.5); 
\draw[fill,color=gray] (13,1.5) rectangle (13.7,6.7);
\draw[fill,color=gray] (13,6.7) -- (13.2,6.9) -- (13.9,6.9) -- (13.7,6.7) -- (13,6.7);
\draw[fill,color=gray] (13.7,1.5) -- (13.9,1.7) -- (13.9,6.9) -- (13.7,6.7) -- (13.7,1.5); 
\draw[fill, color=blue] (9.7,1) rectangle (10.8,2);
\draw[fill, color=blue] (9.7,2) -- (10.1,2.3) -- (11.1,2.3) -- (10.8,2) -- (9.7,2);
\draw[fill, color=blue] (10.8,1) -- (11.1,1.3) -- (11.1,2.3) -- (10.8,2) -- (10.8,1);
\draw[fill,color=blue] (11.3,1.5) -- (11.5,1.7) -- (11.5,2.6) -- (11.3,2.4) -- (11.3,1.5);
\draw[fill, color=blue] (11.7,1) rectangle (12.8,2);
\draw[fill, color=blue] (11.7,2) -- (12.1,2.3) -- (13.1,2.3) -- (12.8,2) -- (11.7,2);
\draw[fill, color=blue] (12.8,1) -- (13.1,1.3) -- (13.1,2.3) -- (12.8,2) -- (12.8,1);
\draw[fill,color=blue] (13.3,1.5) -- (13.5,1.7) -- (13.5,2.6) -- (13.3,2.4) -- (13.3,1.5);
\draw[fill, color=blue] (9.7,3) rectangle (10.8,4);
\draw[fill, color=blue] (9.7,4) -- (10.1,4.3) -- (11.1,4.3) -- (10.8,4) -- (9.7,4);
\draw[fill, color=blue] (10.8,3) -- (11.1,3.3) -- (11.1,4.3) -- (10.8,4) -- (10.8,3);
\draw[fill,color=blue] (11.3,3.5) -- (11.5,3.7) -- (11.5,4.6) -- (11.3,4.4) -- (11.3,3.5);
\draw[fill, color=blue] (11.7,3) rectangle (12.8,4);
\draw[fill, color=blue] (11.7,4) -- (12.1,4.3) -- (13.1,4.3) -- (12.8,4) -- (11.7,4);
\draw[fill, color=blue] (12.8,3) -- (13.1,3.3) -- (13.1,4.3) -- (12.8,4) -- (12.8,3);
\draw[fill,color=blue] (13.3,3.5) -- (13.5,3.7) -- (13.5,4.6) -- (13.3,4.4) -- (13.3,3.5);
\draw[fill,color=blue] (13.9,2) -- (14.1,2.2) -- (14.1,3) -- (13.9,2.8) -- (13.9,2.2);
\draw[fill,color=blue] (13.9,2.8) -- (14.1,3) -- (13.9,3);
\draw[fill,color=blue] (13.9,4) -- (14.1,4.2) -- (14.1,5) -- (13.9,4.8) -- (13.9,4.2);
\draw[fill,color=blue] (13.9,4.8) -- (14.1,5) -- (13.9,5);
\draw[fill,color=blue] (11.9,2.1) -- (12,2.2) -- (12,3.1) -- (11.9,3) -- (11.9,2.1);
\draw[fill,color=blue] (11.9,4.1) -- (12,4.2) -- (12,5.1) -- (11.9,5) -- (11.9,4.1);
\draw[fill,color=blue] (11.9,5) -- (12,5.1) -- (12,5);
\draw[fill,color=pink] (13.15,1.3) rectangle (14.85,2.3);
\draw[fill,color=pink] (14.85,1.3) -- (15.1,1.45) -- (15.1,2.45) -- (14.85,2.3) -- (14.85,1.3);
\draw[fill,color=pink] (14.85,2.3) -- (15.1,2.45) -- (13.3,2.45) -- (13.15,2.3) -- (14.85,2.3);
\draw[fill,color=pink] (11.1,1.3) rectangle (11.7,2.3);
\draw[fill,color=pink] (11.1,2.3) -- (11.3,2.5) -- (12,2.5) -- (11.8,2.3) -- (11.1,2.3);
\draw[fill,color=pink] (11.7,2) -- (11.7,2.3) -- (11.8,2.3) -- (12,2.5) -- (12,2.3) -- (11.7,2);
\draw[fill,color=pink] (9.3,1.3) rectangle (9.7,2.3);
\draw[fill,color=pink] (9.3,2.3) -- (9.5,2.5) -- (10,2.5) -- (9.8,2.3) -- (9.3,2.3);
\draw[fill,color=pink] (9.8,2.3) -- (10,2.5) -- (10,2.3) -- (9.7,2) -- (9.7,2.3);
\draw[fill,color=pink] (13.15,3.3) rectangle (14.85,4.3);
\draw[fill,color=pink] (14.85,3.3) -- (15.1,3.45) -- (15.1,4.45) -- (14.85,4.3) -- (14.85,3.3);
\draw[fill,color=pink] (14.85,4.3) -- (15.1,4.45) -- (13.3,4.45) -- (13.15,4.3) -- (14.85,4.3);
\draw[fill,color=pink] (11.1,3.3) rectangle (11.7,4.3);
\draw[fill,color=pink] (11.1,4.3) -- (11.3,4.5) -- (12,4.5) -- (11.8,4.3) -- (11.1,4.3);
\draw[fill,color=pink] (11.7,4) -- (11.7,4.3) -- (11.8,4.3) -- (12,4.5) -- (12,4.3) -- (11.7,4);
\draw[fill,color=pink] (9.3,3.3) rectangle (9.7,4.3);
\draw[fill,color=pink] (9.3,4.3) -- (9.5,4.5) -- (10,4.5) -- (9.8,4.3) -- (9.3,4.3);
\draw[fill,color=pink] (9.8,4.3) -- (10,4.5) -- (10,4.3) -- (9.7,4) -- (9.7,4.3);
\draw[fill,color=pink] (13.5,2.4) rectangle (15,2.6);
\draw[fill,color=pink] (13.5,2.6) -- (13.8,2.9) -- (15.3,2.9) -- (15,2.6) -- (13.5,2.6);
\draw[fill,color=pink] (15.3,2.9) -- (15.3,2.05) -- (15,2.05) -- (15,2.6) -- (15.3,2.9);
\draw[fill,color=pink] (13.5,4.4) rectangle (15,4.6);
\draw[fill,color=pink] (13.5,4.6) -- (13.8,4.9) -- (15.3,4.9) -- (15,4.6) -- (13.5,4.6);
\draw[fill,color=pink] (15.3,4.9) -- (15.3,4.05) -- (15,4.05) -- (15,4.9) -- (15.3,4.9);
\draw[fill,color=pink] (11.5,2.4) rectangle (12,2.8);
\draw[fill,color=pink] (11.5,2.8) -- (11.7,3) -- (12,3) -- (12,2.8) -- (11.5,2.8);
\draw[fill,color=pink] (11.5,4.4) rectangle (12,4.8);
\draw[fill,color=pink] (11.5,4.8) -- (11.7,5) -- (12,5) -- (12,4.8) -- (11.5,4.8);
\draw (9,0) rectangle (14.7,5.7);
\draw (9,5.7) -- (11,7.7) -- (16.7,7.7) -- (14.7,5.7);
\draw (14.7,0) -- (16.7,2) -- (16.7,7.7) -- (14.7,5.7);
\draw[fill,color=blue] (10,6) circle (0.06cm);
\node at (9.8,5.8) {\tiny{$v_2$}};
\end{tikzpicture}
\caption{Forbidden region (the union of all boxes) and state space (its complement): One resource r of capacity 2 (left), resp.\ 1 (right); three processors each executing the program $PrVrPrVr$, cf Example \ref{ex:cons}(2).}\label{fig:PV}
\end{figure}

The common boundary of $F(r)$ and of $X(r)$ consists of those $\mb{x}\in\prod_{j=1}^n[0, l(k_j)+1]$ satisfying
\begin{itemize}
\item $cr(\mb{x})\le\kappa (r)$, and
\item $\exists \mb{t}=(t_1,\dots ,t_n), -1<t_j<1, t_j=0$ for $x_j\not\in\mb{Z}:\; cr(\mb{x}+\mb{t})>\kappa (r).$
\end{itemize}
In particular, a \emph{vertex} $v=(i_1,\dots ,i_n)\in X(r)_0$ with integer coordinates is contained in this boundary if and only if
\begin{itemize}
\item $cr(v)\le\kappa (r)$
\item $cr(v)+|\{ j|\; i_j\in Pr_j([0:k_j(r)])\cup Vr_j([0:k_j(r)])\}|>\kappa (r)$.
\end{itemize}
The first condition ensures that $v\in X(r)_0$. The second condition has as consequence that a point $v+(t_1,\dots ,t_n)$ -- with $0<t_j<1$ for $i_j\in Pr_j([0:k_j(r)])$ and $-1<t_j<0$ for $i_j\in Vr_j([0:k_j(r)])$ and $t_j=0$ else -- is contained in $F(r)$.

\subsubsection{Several resources}\label{sss:several2}
The forbidden region $F:=F(R)$ corresponding to a \emph{set} $R$ of resources is the \emph{union} $F:=\bigcup F(r)$ of the forbidden regions $F(r),\; r\in R$. It agrees with\\ $\{\mb{x}=(x_1,\dots ,x_n)\in\prod_{j=1}^n[0, l(k_j)+1]|\; \exists r\in R: cr(\mb{x})>\kappa (r)\}$.\\ The state space is its complement: $X=X(R):=\prod_{j=1}^n[0, l(k_j)+1]\setminus F=\bigcap_{r\in R}X(r)$.\\ It agrees with $\{\mb{x}=(x_1,\dots ,x_n)\in\prod_{j=1}^n[0, l(k_j)+1]|\; \forall r\in R: cr(x)\le\kappa (r)\}$.

A vertex $v=(i_1,\dots ,i_n)\in X(R)_0$ with integer coordinates in the common boundary $\partial F=\partial X$ is characterized by the following properties:
\begin{enumerate}
\item $cr(v)\le\kappa (r)$ \emph{for all} $r\in R$;
\item $\exists r\in R: cr(v)+|\{ j|\; i_j\in Pr_j([0:k_j(r)])\cup Vr_j([0:k_j(r)])\}|>\kappa (r)$.
\end{enumerate}

In the following, we focus on properties of spaces $\vec{P}(X)_s^t$ of directed paths (cf Section \ref{ss:Euclid}) in state space $X$  between two vertices (in $X_0$) with integer coordinates. 
\begin{remark}
Higher Dimensional Automata, as mentioned in the abstract, are far reaching generalizations of the state spaces corresponding to Dijkstra's concurrent $PV$ programs. Introduced by Pratt and van Glabbeek (cf \cite{Pratt, Glabbeek:05}) as generalizations of (labelled) transition systems, their underlying geometry is that of a pre-cubical complex, a glueing of directed cubes of various dimensions; not necessarily embedded in a cubular tiling of a Euclidean space. Consult \cite{FGHMR:16,Raussen:10,Raussen:12,Ziemianski:17,Ziemianski:20} for definitions and for combinatorial/topological descriptions of the spaces of directed paths from a source to a target. 
\end{remark}

\section{Future links and their connectivity}
\subsection{Future links}\label{ss:fl}
Let $I^n=[0,1]$ denote the standard $n$-cube. Any subcube (aka face) of $I^n$ containing the minimal vertex $\mb{0}$ is characterized by the set of 1-coordinates of its \emph{maximal} vertex. The poset of the subcubes properly containing $\mb{0}$ as minimal vertex is thus in an order-preserving correspondence with the non-empty subsets of $[1:n]$ and forms an $(n-1)$ dimensional ``future'' simplex $\Delta_{n-1}$. For a cubical \emph{sub}complex $X\subseteq I^n$ containing $\mb{0}$, consider the \emph{future link} $lk^+(X,\mb{0})\subset\Delta_{n-1}$ in $X$ consisting of those simplices corresponding to the subcubes \emph{contained in} $X$. 

The \emph{future link} $lk^+(X,v)$ of a vertex $v=(v_1,\dots ,v_n)$ (with integer coordinates) in a Euclidean cubical complex $X\subset\mb{R}^n$ is similarly defined encoding the subcomplex $X\cap\prod_1^n [v_j,v_j+1]$ contained in the unit cube ``over'' $v$. By definition, it is a simplicial complex embedded in the future simplex $\Delta_{n-1}(v)$ describing all cubes in $\mb{R}^n$ properly containing $v$ as minimal vertex; cf Ziemia\'{n}ski \cite[Def.\ 5.1]{Ziemianski:16} and Belton etal \cite[Def.\ 4]{Beltonetal:19}, \cite[Sect.\ 2.2]{Beltonetal:21} who deal with analogously defined past links. Every directed path starting at $v$ proceeds, for a while, in a face of the future link $lk^+(X(r),v)$. The future link thus tells us about the possible directions such a path can (or cannot) take; locally.

\subsection{A single resource}\label{ss:one}
Let $v=(v_1,\dots ,v_n)\in X(r)\subseteq \prod_{j=1}^n[0, l(k_j)+1]$ denote a vertex in the state space $X(r)$ corresponding to a single resource $r$; cf Section \ref{sss:one}. Let $m(v)$ denote the number of coordinates that are \emph{not} maximal (ie $v_j\neq l(k_j)+1$). To the maximal cube $M(v)\subset\prod_{j=1}^n[0, l(k_j)+1]$ with lower vertex $v$ -- of dimension $m(v)$ -- corresponds its future simplex $\Delta^{m(v)-1}(v)$. The future link $lk^+(X(r),v)\subseteq\Delta^{m(v)-1}(v)$ encodes those faces of $M(v)$ that are contained in $X(r)$. 

For a single resource $r$, the consumption function takes value $0$ at every vertex $v\in X_0$. Hence, the capacity $\kappa (r)$ of resource $r$ has the following consequence for future links:

\begin{lemma}\label{lem:link1}
The future link $lk^+(X(r),v)$ at a vertex $v=(v_1,\dots ,v_n)$
\begin{enumerate}
\item is the $(\kappa (r)-1)$-skeleton $\Delta^{m(v)-1}_{(\kappa (r)-1)}(v)$ of the future simplex $\Delta^{m(v)-1}(v)$ if all coordinates $v_j$ are of the form $Pr_j(i)$ or the final $l(k_j)+1$. 
\item is contractible otherwise. 
\end{enumerate}
\end{lemma}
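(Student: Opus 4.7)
The plan is to parametrise simplices of $lk^+(X(r),v)$ by non-empty subsets $S$ of the non-final index set $\{j:v_j<l(k_j)+1\}$, translate the condition that the $S$-subcube of $M(v)$ lies in $X(r)$ into a single inequality, and read off the two cases combinatorially. By the update formula of Section \ref{ss:consump}, $cr$ is constant on the interior of this subcube with value
\[ cr(v)+\sum_{j\in S}dr_j(v_j), \]
so the containment condition is equivalent to $\sum_{j\in S}dr_j(v_j)\le\kappa(r)-cr(v)$. Since $dr_j(v_j)=+1$ iff $v_j=Pr_j(i)$, equals $-1$ iff $v_j=Vr_j(i)$, and vanishes otherwise, setting $A:=\{j:v_j=Pr_j(i)\text{ for some }i\}$ the condition reads
\[ |S\cap A|\le\kappa(r)-cr(v)+|\{j\in S:v_j=Vr_j(i)\text{ for some }i\}|. \]

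In case (1) every non-final coordinate of $v$ belongs to $A$, so no coordinate of $v$ is a $Vr_j$-position and none is strictly inside an open lock interval; the geometrically relevant value of $cr(v)$ is therefore $0$, the right-hand side collapses to $\kappa(r)$, and the link collects precisely the non-empty $S\subseteq\{j:v_j<l(k_j)+1\}$ with $|S|\le\kappa(r)$, which is exactly the $(\kappa(r)-1)$-skeleton of $\Delta^{m(v)-1}$. In case (2) I pick a non-final index $j_0$ with $v_{j_0}\notin\{Pr_{j_0}(i)\}_i$, so $dr_{j_0}(v_{j_0})\in\{-1,0\}$. The singleton $\{j_0\}$ is a simplex of the link because $dr_{j_0}(v_{j_0})\le 0\le\kappa(r)-cr(v)$, and for every simplex $S$ of the link
\[ \sum_{j\in S\cup\{j_0\}}dr_j(v_j)\le\sum_{j\in S}dr_j(v_j)\le\kappa(r)-cr(v), \]
so $S\cup\{j_0\}$ is again a simplex of the link. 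Hence $lk^+(X(r),v)$ is a simplicial cone with apex $j_0$, and in particular contractible.

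The main obstacle I anticipate lies in the careful accounting of the consumption function at integer vertices. The characteristic functions $cr_j$ are defined on the \emph{open} lock intervals $]Pr_j(i),Vr_j(i)[$, and the side convention $cr_j(l(j)+1)=1$ means that the literal value of $cr(v)$ can differ from the quantity entering the update formula of Section \ref{ss:consump}, which is a one-sided limit taken from the $S$-subcube. Pinning down that, in case (1), this geometrically relevant $cr(v)$ vanishes --- and more generally that the displayed inequality is the correct translation of subcube containment into a combinatorial constraint on $S$ --- is the delicate step; once that is settled the cone/skeleton dichotomy above is routine.
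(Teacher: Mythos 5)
Your proof is correct and follows essentially the same route as the paper's: part (1) is the observation that admissible future subcubes correspond to subsets of size at most $\kappa(r)$ of the $Pr$-coordinates, and part (2) is exactly the paper's cone argument (a direction $j_0$ with $dr_{j_0}(v_{j_0})\le 0$ can be adjoined to any admissible subcube, i.e.\ $Q\times I$ stays in $X(r)$). Your explicit bookkeeping of the containment inequality $cr(v)+\sum_{j\in S}dr_j(v_j)\le\kappa(r)$, including the caveat about the convention $cr_j(l(j)+1)=1$, is a welcome sharpening of what the paper leaves implicit.
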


\begin{proof}
\begin{enumerate}
\item The maximal subcubes in $X(r)$ containing $v$ as minimal vertex are in one-to-one correspondence with subsets of cardinality $\kappa (r)$ among the coordinates in $v$ that are not final.
\item Suppose that the coordinate $v_j$ is neither of the form $Pr_j(i)$ nor the final $l(k_j)+1$. Consider the subsimplex $\Delta_j^{m(v)-2}(v)\subset\Delta^{m(v)-1}(v) $ consisting of all subsets \emph{not containing} $j$. For every subcube
$Q\subset X(r)$ with minimal vertex $v$ and maximal vertex $v'$ with \emph{same} $j$-th coordinate $v_j$ (with associated simplex contained in $\Delta_j^{m(v)-2}(v)$), the product $Q\times I$ (in direction $j$, with associated simplex contained in $\Delta^{m(v)-1}(v)$) is also contained in $X(r)$. In other words, the future link $lk^+(X(r),v)$ is a \emph{cone} with apex corresponding to $j$; and hence contractible.
\end{enumerate}
\end{proof}

\begin{corollary}
The future link $lk^+(X(r),v)$ at a vertex $v=(v_1,\dots ,v_n)\in X_0$ is
\begin{enumerate}
\item $(\kappa (r)-2)$-connected but not $(\kappa (r)-1)$-connected if all coordinates $v_j$ are of the form $Pr_j(i)$ or the final $l(k_j)+1$ and if $m(v)>\kappa(r)$.
\item contractible if $m(v)\le \kappa (r)$ or if $v$ has at least one coordinate $v_j$ not of the form $Pr_j(i)$ or the final $l(k_j)+1$.
\end{enumerate}
\end{corollary}

\begin{example}\label{ex:lk+}
We refer to Figure \ref{fig:PV} and Figure \ref{fig:lk+conn}. The future link $lk^+(X(r);v_0)$ of the vertex $v_0=(1,1,1)$ on the left hand side of Figure \ref{fig:PV} is the $1$-skeleton ($\kappa (r)-1=1$) of the future simplex $\Delta^2(v_0)$; a connected, but not simply-connected hollow triangle. Interpretation: Two out of three processors can proceed at $v_0$.\\
The three remaining vertices occur when $\kappa (r)-1=0$ (right hand side of Figure \ref{fig:PV}). The future link $lk^+(X(r);v_1)$ at $v_1=(1,1,0)$ is the cone over two points representing the future link of $v_1$ \emph{restricted to the bottom plane}; it is contractible. Interpretation: Processor 3 can proceed independently of the two others.\\
The future link $lk^+(X(r);v_2)$ of the vertex $v_2=(1,1,4)$ is the $0$-skeleton of the future simplex $\Delta^1(v_2)$ -- in the top plane, consisting of two points. The drawing on the right represents a cutout of the forbidden region from Figure \ref{fig:PV}) close to the vertex $v_3=(1,1,1)$ (not visible in that figure). Among the faces of the future cube $M(v_3)$ (spanned by $(1,1,1)$ and $(2,2,2)$) adjacent to $v_3$, only the three dashed edges belong to the state space $X(r)$; they give rise to the $\kappa (r)-1=0$-skeleton of the future simplex $\Delta^2(v_3)$ (represented by three points). Interpretation: At $v_2$ and at $v_3$, only one of the two, resp.\ three active processors may proceed.

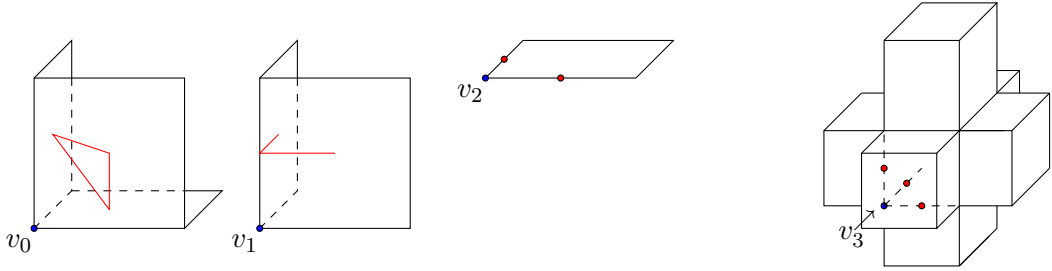
\begin{figure}[h]
\begin{tikzpicture}
\draw (0,0) rectangle (2,2);
\draw (0,2) -- (0.5,2.5) -- (0.5,2);
\draw [dashed] (0.5,2) -- (0.5,0.5) -- (0,0);
\draw [dashed] (0.5,0.5) -- (2,0.5);
\draw (2,0.5) -- (2.5,0.5) -- (2,0);
\draw [color=red] (1,0.25) -- (0.25,1.25) -- (1,1) -- (1,0.25);
\draw (3,0) rectangle (5,2);
\draw (3,2) -- (3.5,2.5) -- (3.5,2);
\draw [dashed] (3.5,2) -- (3.5,0.5) -- (3,0);
\draw [color=red] (4,1) -- (3,1) -- (3.25,1.25);
\draw (6,2) -- (6.5,2.5) -- (8.5,2.5) -- (8,2) -- (6,2);
\draw [fill=red] (7,2) circle (0.04cm);
\draw [fill=red] (6.25,2.25) circle (0.04cm);
\draw [fill=blue] (0,0) circle (0.04cm);
\draw [fill=blue] (3,0) circle (0.04cm);
\draw [fill=blue] (6,2) circle (0.04cm);
\node at (-0.2,-0.2) {$v_0$};
\node at (2.8,-0.2) {$v_1$};
\node at (5.8,1.8) {$v_2$};
\draw (11,0) rectangle (12,1);
\draw (11,1) -- (11.3,1.3);
\draw (12,0) -- (12.3,0.3); 
\draw (12,1) -- (13.1,2.1);
\draw (11,0.3) -- (10.5,0.3) -- (10.5,1.3) -- (12.9,1.3);
\draw (12.3,1.3) -- (13,1.3) -- (13,0.3) -- (12.3,0.3);
\draw (11,1.8) -- (11.3,1.8);
\draw (12.8,1.8) -- (13.5,1.8);
\draw (13.5,0.8) -- (13.5,1.8);
\draw (10.5,1.3) -- (11,1.8);
\draw (13,0.3) -- (13.5,0.8);
\draw (13,1.3) -- (13.5,1.8);
\draw [fill=blue] (11.3,0.3) circle (0.04cm);
\draw (11.3,1.3) -- (11.3,2.5);
\draw (12.3,2.5) -- (12.3,-0.5);
\draw (11.3,2.5) -- (11.8,3) -- (12.8,3) -- (12.3,2.5) -- (11.3,2.5);
\draw (12.8,3) -- (12.8,1.8);
\draw (12.8, 0.3) -- (12.8,0) -- (12.3,-0.5) -- (11.3,-0.5) -- (11.3,0);
\draw (12.3,-0.5) -- (12.7,-0.1);
\draw [dashed] (11.3,0.3) -- (12.3,0.3);
\draw [dashed] (11.3,0.3) -- (11.3,1.3);
\draw [dashed] (11.3,0.3) -- (11.8,0.8);
\draw (12.8,2.1) -- (13.1,2.1) -- (13.1,1.8);
\node at (10.95,0) {$v_3^{\nearrow}$};
\draw [fill=red] (11.3,0.8) circle (0.04cm);
\draw [fill=red] (11.8,0.3) circle (0.04cm);
\draw [fill=red] (11.6,0.6) circle (0.04cm);
\end{tikzpicture}
\caption{Future links of four vertices in Figure \ref{fig:PV}.}
\label{fig:lk+conn}
\end{figure}
\end{example}


\begin{remark} 
For a single resource $r$ with given capacity $\kappa$, information about the homology, ie the Betti numbers, of the path space $\vec{P}(X(r))_{\mb{0}}^{2\mb{k}+\mb{1}},\; \mb{k}=(k_1,\dots ,k_n)\in \mb{N}_{\ge 0}^n$, can be found in Meshulam-Raussen \cite[Corollary 5.2]{MR:17}.
\end{remark}

\subsection{Several shared resources}\label{ss:sev}

Let us now consider a program on the resource set $R$ and a vertex $v=(v_1,\dots ,v_n)\in X_0$ in the corresponding state space $X:=X(R)$, cf Section \ref{sss:several}. Every coordinate $v_j$ is either $0$ or maximal or in the range of one of the functions $Pr$, resp.\ $Vr,\; r\in R$. It is only relevant to investigate the future links of vertices $v$ with \emph{all} coordinates either maximal or in the range of a function $Pr_j,\; r\in R, j\in P$, since
\begin{lemma}\label{lem:contr}
Let one of the coordinates $v_j$ be either $0$ or of the form $Vr(i_j)$ for some resource $r\in R$. Then the future $lk^+(X,v)$ is contractible.
\end{lemma}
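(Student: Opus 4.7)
The plan is to reduce the claim to a resource-by-resource application of Lemma~3.2(2) together with the elementary observation that an intersection of cones with a common apex is again a cone with that apex.

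First, since $X=X(R)=\bigcap_{r\in R}X(r)$ and a subcube of the maximal cube $M(v)$ at $v$ is contained in the intersection exactly when it is contained in every factor, one has the identification
\[lk^+(X,v)\;=\;\bigcap_{r\in R}lk^+(X(r),v)\]
as subcomplexes of the common future simplex $\Delta^{m(v)-1}$.

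Next, let $j$ be the distinguished coordinate provided by the hypothesis, so either $v_j=0$ or $v_j=Vr(i_j)$ for some specific $r\in R$. In both cases $v_j<l(k_j)+1$, so $j$ corresponds to an honest vertex of $\Delta^{m(v)-1}$. The key point is that for \emph{every} resource $r'\in R$, the coordinate $v_j$ is not of the form $Pr'_j(i')$: if $v_j=0$, this is immediate because every $Pr'_j$ takes values in $[1:l(j)]$; if $v_j=Vr(i_j)$, then since each integer in $[1:l(j)]$ encodes exactly one PV-command of processor $j$ (and hence involves at most one resource), $v_j$ cannot simultaneously be a $Pr'$-value for any resource. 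Thus Lemma~3.2(2), applied with the single resource $r'$, shows that $lk^+(X(r'),v)$ is a cone with apex the vertex corresponding to $j$, for every $r'\in R$.

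Finally, the intersection of a family of cones sharing a common apex is again a cone with that apex: if $\sigma$ is a simplex of $\bigcap_{r'} lk^+(X(r'),v)$, then $\sigma\cup\{j\}$ lies in each $lk^+(X(r'),v)$ by the cone property, hence in the intersection. A cone is contractible, and the proof is complete. The only nontrivial point is Step~2, namely verifying that $v_j$ fails to be a $Pr$-value for every resource simultaneously; this is just the sequentiality and disjointness of $PV$-commands in a single processor's program, as encoded in Section~2.1.1.
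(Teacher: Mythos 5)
Your proof is correct and is essentially the paper's own argument: the paper simply says the future link is again a cone ``as in the proof of Lemma \ref{lem:link1}'', i.e.\ since $v_j$ is not a $Pr'$-value for any $r'\in R$, every future cube extends in direction $j$ without acquiring a lock, so $j$ is a cone apex. Your packaging via $lk^+(X,v)=\bigcap_{r}lk^+(X(r),v)$ and the intersection-of-cones-with-common-apex observation is just a factored form of that same one-step argument, with the (worthwhile) explicit check that $v_j$ fails to be a $Pr'$-position for all resources simultaneously.
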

\begin{proof}
As in the proof of Lemma \ref{lem:link1}, the future link is then a cone and hence contractible.
\end{proof}

We need some notation to explore future links of the remaining vertices. 

\begin{definition}\label{def:X0P}
We let $X_0^P$ denote the set of those vertices $v\in X_0$ with all coordinates maximal or in the range of one of the $Pr$-functions. For such a vertex $v$:
\begin{itemize}
\item $P(v)\subseteq P$ denotes the subset of \emph{active} processors $j$, with $v_j\neq l(k_j)+1$ (not a final state); with cardinality $m(v):=|P(v)|$.
\item $R(v)\subseteq R$ denotes the subset of \emph{resources} $r$ such that there exists a $j\in [1:n]$ with $v_j$ in the range of $Pr$ (a lock to $r$ is requested at $v$).
\item For $r\in R(v)$, we let $Pr(v)$ denote $\{ j\in P(v)| v_j\in Pr([1:k_j(r)])\}$ the set of processors with a call to resource $r$ at $v$. Its cardinality $dr(v):=|Pr(v)|>0$ (cf Section \ref{sss:several}) is the number of calls to $r$ issued at $v$. The sum $\sum_{r\in R(v)}dr(v)$ is equal to $m(v)=|P(v)|$ -- if every processor calls exactly one resource at a ``time''.
\end{itemize}
\end{definition}

With this in place, we can formulate the following crucial simple technical result determining the topology of a future link $lk^+(X;v)$ at a vertex $v\in X_0^P$. Remember that $cr(v)$ denotes the consumption of locks to resource $r$ at the vertex $v$ (cf Section \ref{sss:several}).

\begin{proposition}\label{prop:linkjoin}
For $v\in X_0^P$, the future link $lk^+(X,v)$ is homeomorphic to a join \emph{(}consisting of all convex combinations; notation $*$\emph{)} of skeleton spaces \[ lk^+(X,v)\cong\Coast_{r\in R(v)}\Delta^{dr(v)-1}_{(\kappa(r)-cr(v)-1)}=\Coast_{r\in R(v),\; cr(v)<\kappa (r)}\Delta^{dr(v)-1}_{(\kappa(r)-cr(v)-1)}.\]
\end{proposition}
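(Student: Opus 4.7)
The plan is to match, combinatorially and then geometrically, the face poset of $lk^+(X,v)$ with that of the claimed join. A subset $S\subseteq P(v)$ indexes a face of $lk^+(X,v)\subseteq\Delta^{m(v)-1}$ precisely when the subcube $C_S$ with minimal vertex $v$, spanned by the coordinate directions in $S$, lies in $X$.

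First I would convert the containment $C_S\subseteq X$ into an arithmetic inequality on the consumption functions. Since $v\in X_0^P$, every non-maximal coordinate $v_j$ has the form $Pr_j(i_j)$ for a \emph{unique} $r\in R(v)$; hence $dr_j(v_j)=1$ for that specific resource and $=0$ otherwise, so in particular all $dr_j(v_j)\ge 0$. The identity $cr(v+\mb{t})=cr(v)+\sum_{t_j>0}dr_j(v_j)$ from Section \ref{ss:consump} then shows that the maximum of $cr$ over $C_S$ equals $cr(v)+|S\cap Pr(v)|$. Consequently $C_S\subseteq X$ iff $|S\cap Pr(v)|\le\kappa(r)-cr(v)$ for every $r\in R$; and since $Pr(v)=\emptyset$ for $r\notin R(v)$, only the constraints with $r\in R(v)$ are restrictive.

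Next I would recognize this as the face condition of the prescribed join. The separations $Pr_j(i)<Vr_j(i)<Pr_j(i+1)$ forbid two simultaneous $Pr$-calls on a single processor, so each $j\in P(v)$ lies in $Pr(v)$ for exactly one $r$; the sets $\{Pr(v)\mid r\in R(v)\}$ thus partition $P(v)$. Accordingly, any $S\subseteq P(v)$ splits uniquely as $S=\bigsqcup_{r\in R(v)}S_r$ with $S_r=S\cap Pr(v)$, and the inequality $|S_r|\le\kappa(r)-cr(v)$ is by definition the condition that $S_r$ be a face of the $(\kappa(r)-cr(v)-1)$-skeleton of the full simplex $\Delta^{dr(v)-1}$ on vertex set $Pr(v)$. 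This identifies $lk^+(X,v)$ with the simplicial join $\Coast_{r\in R(v)}\Delta^{dr(v)-1}_{(\kappa(r)-cr(v)-1)}$; geometric realization turns this into the topological join of convex combinations, naturally embedded inside $\Delta^{m(v)-1}$, yielding the asserted homeomorphism. The equivalence of the two expressions in the statement follows because a factor with $cr(v)=\kappa(r)$ contributes only the $(-1)$-skeleton $\{\emptyset\}$, a unit for the join that can be dropped.

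The main obstacle is the arithmetic step: verifying that the maximum of $cr$ on the \emph{closed} cube $C_S$ is indeed attained with all selected coordinates strictly positive. This hinges on the non-negativity of every $dr_j(v_j)$ at vertices in $X_0^P$; if some $v_j$ were of the form $Vr_j(i_j)$ one would have $dr_j(v_j)=-1$ and the argument would collapse, which is why the hypothesis $v\in X_0^P$ is essential and perfectly consistent with the complementary Lemma \ref{lem:contr}.
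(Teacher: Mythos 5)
Your proof is correct and follows essentially the same route as the paper's: you decompose each candidate face along the partition $P(v)=\bigsqcup_{r\in R(v)}Pr(v)$, translate the capacity constraint into the bound $|S\cap Pr(v)|\le\kappa(r)-cr(v)$ on each factor, and identify products of cubes with joins of the corresponding skeleta. Your version merely makes explicit (via the consumption identity and the non-negativity of $dr_j(v_j)$ on $X_0^P$) what the paper's proof asserts in one line, so there is nothing to add.
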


Join with an empty set -- occurring in the left hand join if $cr(v)=\kappa (r)$ -- has to be understood as $A*\emptyset =A$.
\begin{proof}
Decompose the maximal cube $M(v)\subset\prod_{j=1}^n[0, l(k_j)+1]$ with minimal vertex $v\in X_0^P$ -- of dimension $m(v)$ -- as a product $\prod_{r\in R(v)}M(r)$ of cubes $M(r)$ in directions $j\in Pr(v)$; each of dimension $dr(v), r\in R(v)$. A (future) subcube $C\subseteq X\cap M(v)$ decomposes correspondingly as $C=\prod_{r\in R(v)}C(r)$. For a future cube $C$ to be contained in $X(R)$, the capacity constraint regarding $r\in R$ requires  precisely that $\dim C(r)\le \kappa (r)-cr(v)$: Since $cr(v)$ locks are already active at $v$, only up to $\kappa (r)-cr(v)$ new locks can be acquired at $v$. A directed path can increase in up to $\kappa (r)-cr(v)$ directions $j\in Pr(v)$ from $v$.

Products of cubes correspond to \emph{unions} of sets of indices, and these correspond to \emph{joins} of the associated future links of $v$ with respect to each of the $C(r),\; r\in R(v)$. Apply Lemma \ref{lem:link1}.
\end{proof}

\begin{remark}
Proposition \ref{prop:linkjoin} is only true -- on the nose -- if one assumes that every processor makes call to resources subsequently, not at the same ``time''. If this is not the case, one may split up two concurrent calls without changing the homotopy type of the path space, as observed by Ziemia\'{n}ski \cite[Sect.\ 3]{Ziemianski:16}.
\end{remark}

\subsection{Spare capacity at vertices}
Our next aim is to determine the connectivity of future links. When are they path-connected or of higher connectivity? To answer that question, we introduce the notion of spare capacity that can be determined directly from the semantics of a given PV program; cf Section \ref{ss:sparecapalg} for details.

\begin{definition}\label{def:cc}
For every vertex $v\in X_0^P$, let $\chi_v: R\to \{ 0,1\}$ denote the characteristic function of $R(v)\subset R$.

The \emph{spare capacity} $\kappa(X;v)$ of the state space $X$ at the (allowed) vertex $v\in X_0^P$ is defined as \[\kappa(X;v):=\begin{cases}\infty & dr(v)\le\kappa (r)-cr(v)>0 \mbox{ for at least one } r\in R(v)\\ \sum_{r\in R}\chi_v(r)(\kappa(r)-cr(v)) & \mbox{ else}\end{cases}.\]
\end{definition}
If finite, the spare capacity at $v$ measures the maximal dimension of a subcube of the cube with bottom vertex $v$ that is contained in the state space $X$, ie the maximal number of processors that can take a step at $v$ simultaneously. This will be explained in detail in Proposition \ref{cor:linkconn}.

\begin{remark}\label{rem:dead}
Inequalities (1) and (2) at the end of Section \ref{sss:several} have the following consequences:
\begin{enumerate}
\item If a vertex $v$ is contained in the state space $X$, then $cr(v)\le \kappa (r)$ for each $r\in R(v)$. Hence $\kappa(X;v)\ge 0$ for every $v\in X_0^P$, cf Section \ref{ss:forbidden}.
\item For a vertex $v\in X_0^P$ in the (lower) \emph{boundary} of the state space $X$, we have for every $r\in R(v)$ moreover: $dr(v)>\kappa (r)-cr(v)$, cf Section \ref{ss:forbidden}. In particular, $\kappa(X;v)$ is finite for such a vertex. 
\item In the opposite direction, assume $v\in X_0^P$ with finite spare capacity $\kappa(X;v)$. For every resource $r\in R(v)$ this requires $cr(v)=\kappa (r)$ or $dr(v)>\kappa (r)-cr(v)$.  Since $dr(v)>0$ for every $r\in R(v)$, the inequality $dr(v)>\kappa (r)-cr(v)$ holds in both cases. Hence $v$ is contained in the \emph{boundary} of $X$; cf (2) at the end of Section \ref{sss:several}. 
\item For $\kappa(X;v)<\infty$, the definition of spare capacity is equivalent to\\ $\kappa(X;v)= \sum_{r\in R(v)} \kappa(r)-\sum_{r\in R(v)}cr (v)$.
\end{enumerate}
\end{remark}

\begin{example}
Let us determine spare capacities and future links at vertices $v\in X_0^P$ in the boundary of the forbidden region in the three cases from Example \ref{ex:cons}:
\begin{enumerate} 
\item Example \ref{ex:cons}(1):
The relevant vertices in $X_0^P$ are $(2,1), (1,2)$ and $(2,2)$, cf Figure \ref{fig:PV}. Using the formula in Remark \ref{rem:dead}(4), we obtain $\kappa (X;(2,1))=\kappa (X;(1,2))=2-1=1$, whereas $\kappa (X;(2,2))=2-2=0$.\\ The future link $lk^+(X;v)$ consists of two disjoint points in the first two cases (the $0$-skeleton of an edge), and it is empty in the last.
\item Example \ref{ex:cons}(2), cf Figure \ref{fig:PV} and Figure \ref{fig:lk+conn}: For $\kappa (r)=2$ and the vertex $v_0=(1,1,1)$, we obtain $\kappa (X;v_0)=2-0=2$.\\
For $\kappa (r)=1$, the spare capacities $\kappa (X;v_i)$ at the vertices $v_2=(1,1,4)$ and $v_3=(1,1,1)$ are both equal to $1-0=1$. The vertex $v_1=(1,1,0)$ is not contained in $X_0^P$.\\
Remark that these results are consistent with the determination of future links in  Example \ref{ex:lk+}.
\item Example \ref{ex:cons}(3):
\begin{description} 
\item[A] At the vertex $v=(2,2,2,2)\in X_0^P$ with the final lock requests, $R(v)=R, \kappa (X;v)$ $=3+3-3-1=2$ and $lk^+(X;v)=\Delta ^2_{(1)}$, the 1-skeleton of a 2-simplex homeomorphic to $S^1$ (path-connected, but not simply-connected). Interpretation: $4$ cannot proceed whereas two of the processors $i,\; i\le 3,$ can proceed concurrently.   
\item[B] At the vertex $v=(2,2,2,2)$, $\kappa (X;v)=3+3-2-2=2$ and $lk^+(X;v)=\Delta^1_{(0)}\ast\Delta^1_{(0)}$, the join of two spaces consisting of two points each; the resulting graph (four edges and four vertices) is homotopy equivalent to $S^1$.\\ Interpretation: One of processors $1,2$ and also one of $3,4$ can proceed simultaneously; the local future at $v$ consists of four (2-dimensional) square faces, corresponding to the four choices of two pairs.
\end{description}
\item Example \ref{ex:cons}(4):
\begin{description}
\item[A] The vertex $w=(2,2,2,2,2)$ is no longer reachable: $cr(w)=4>3=\kappa (r)$.
\item[B] In this case, $\kappa(X;w)=3+3-3-2=1$, and $lk^+(X,v)=\Delta^2_{(0)}$ is a (non-connected) 3 point space.\\  Interpretation: Only one of processors $0,1,2$ can proceed along an edge; processors $3$ and $4$ have to wait. 
\end{description}
\end{enumerate}
\end{example}

More systematically, the spare capacity $\kappa (X;v)$ allows to determine the connectivity of the future link $lk^+(X,v)$:

\begin{proposition}\label{cor:linkconn}
Let $v\in X_0^P$ denote a vertex in the state space $X$. 
\begin{enumerate}
\item If $\kappa(X;v)=\infty$, then $lk^+(X,v)$ is contractible.
\item If $\kappa(X;v)<\infty$, then $\kappa(X;v)\le n-|R(v)|$.
\item $v$ is a deadlock vertex (ie the only directed path starting at $v$ is the constant path with value $v$ or equivalently $lk^+(X,v)=\emptyset$) if and only if $\kappa(X;v)=0$. 
\item The future link $lk^+(X,v)$ is disconnected if and only if $\kappa(X;v)=1$.
\item The future link $lk^+(X,v)$ is path-connected but not simply connected if and only if $\kappa(X;v)=2$. 
\item If $2<\kappa(X;v)<\infty$, then $lk^+(X,v)$ is $(\kappa(X;v)-2)$-connected but not $(\kappa(X;v)-1)$-connected.
\end{enumerate}
\end{proposition}

\begin{proof}
\begin{enumerate}
\item If $dr(v)\le\kappa (r)-cr(v)$ for some $r\in R(v)$, then the skeleton corresponding to $r$ in the join decomposition from Proposition \ref{prop:linkjoin} is the \emph{entire} simplex $\Delta^{dr(v)-1}(v)$ (non-empty since $dr(v)>0$) and hence contractible: All processors locking $r$ at $v$ can proceed independently of each other. Moreover, a join with a contractible space (in Proposition \ref{prop:linkjoin}) is contractible.
\item For each $r\in R(v)$ we have that $cr(v)-\kappa (r)<dr(v)$ -- even if the left hand side is $0$. Since we are dealing with integers, this means that $cr(v)-\kappa (r)\le dr(v)- 1$. Summing up over all $r\in R(v)$, we get the spare capacity on the left and $n-|R(v)|$ on the right.
\item Spare capacity $\kappa(X;v)=0$ for $v\in X_0$ if and only if $cr(v)=\kappa (r)$ for every $r\in R(v)$. Hence, none of the processors can acquire an additional lock at $v$; or more technically, all skeleta $\Delta^{dr(v)-1}_{(\kappa(r)-cr(v)-1)},\; r\in R(v)$, are empty, and so is $lk^+(X;v)$.\\ If $\kappa(X;v)>0$, then $cr(v)<\kappa (r)$ for some $r\in R(v)$. At least one processor can proceed from $v$; at least one of the skeleta is non-empty, and so is then $lk^+(X;v)$.
\item The spare capacity $\kappa(X;v)=1$ if and only if $\kappa (r)=cr(v)$ for every $r\in R(v)$ apart from a single resource $r_0\in R(v)$ with $\kappa (r_0)=cr_0(v)+1$. In that case $lk^+(X,v)$ is the $0$-skeleton of $\Delta^{dr_0(v)-1}$ which is not path-connected since $dr_0(v)>\kappa (r_0)-cr_0(v)=1$.
\item The spare capacity $\kappa(X;v)=2$ if and only if either there is a single resource $r_0\in R(v)$ with $\kappa (r_0)=cr_0(v)+2$ and $\kappa (r)=cr(v)$ for all others or if there are two resources $r_1, r_2\in R(v)$ with $\kappa (r_i)=cr_i(v)+1$ and $\kappa (r)=cr(v)$ for all others.\\ In the first case, $lk^+(X,v)$ is the $1$-skeleton of a simplex $\Delta^{dr_0(v)-1}$, which is path-connected, but not simply-connected, since $dr_0(v)>\kappa (r_0)-cr_0(v)=2$.\\ In the second case $lk^+(X,v)$ is the join of two discrete spaces which is path-connected. The future link cannot be simply-connected:  Each of the discrete spaces contains at least two points since $dr_i(v)>\kappa (r_i)-cr_i(v)=1$.
\item If $\kappa(X;v)\ge 3$, then $lk^+(X,v)$ is either the $(\kappa(X;v)-1)$-skeleton of a non-empty simplex or the join of a path-connected space with another space, and therefore simply-connected; connectivity can thus be read off from homology.\\ The Mayer-Vietoris sequence in homology (cf.\ \cite[ch.\ 2.2]{Hatcher:02}) applied to a join $A*B=CA\times B\cup_{A\times B}A\times CB$ shows: If $A$ $k$-connected and $B$ is $l$-connected, then $A*B$ is $(k+l+1)$-connected. Inductively, this implies that the join in Proposition \ref{prop:linkjoin} is trivial in dimensions up to $\kappa(X;v)-2$ and non-trivial in dimension $\kappa(X;v)-1$. The Betti number in that dimension is the sum of the top-dimensional Betti numbers of the skeleta of the simplices involved. 
\end{enumerate}
\end{proof}

\begin{remark}\label{rem:notpc}
Let us examine (4) in Proposition \ref{cor:linkconn} more closely: A vertex $v\in X_0^P$ has a future link $lk^+(X;v)$ that is \emph{not path-connected} if and only if $\kappa(X;v)=1$. This is the case if all resources $r\in R(v)$ are exhausted ($\kappa (r)=cr(v)$) \emph{except} for a single resource $r_0\in R(v)$ with $dr_0(v)>\kappa (r_0)-cr_0(v)=1$.\\
We will call such a vertex $v$ with $\kappa (X;v)=1$ a \emph{critical} vertex (or state).
\end{remark}

\begin{remark}
Is it necessary to analyse the local future of (points on) faces of the state spaces, as well? This is not the case as long as we only consider spaces of directed paths whose target is a \emph{vertex} $t$. It is shown in Fajstrup \cite{Fajstrup:05} that 
$\vec{P}(X)_v^{t}$ and $\vec{P}(X)_{v_0}^{t}$ are homotopy equivalent if $v$ is a point on a face with top vertex $v_0$. 
\end{remark}

\section{Global connectivity of spaces of directed paths}
\subsection{Spare capacity of a concurrent PV program}
We fix a PV-program with state space $X$ and target vertex $t$. All future links have to be understood with respect to that target $t$. Our aim is to establish connectivity bounds for spaces $\vec{P}(X)_v^t$ of directed paths within $X$ starting at a vertex $v\in X_0$ and ending at $t$, endowed with the compact-open (aka uniform convergence) topology; in particular, to find out whether such spaces are path-connected, via \emph{directed} homotopies, ie via 1-parameter families of \emph{directed} paths; cf Fajstrup etal \cite[ch.\ 4.2]{FGHMR:16}.

\begin{definition}\label{def:cumcap}
The \emph{spare capacity} $\kappa(X)$ of a concurrent $PV$ program with state space $X$ is defined as $\kappa(X):=\min_{v\in X_0^P, v\le t}\kappa(X;v)$, ie the \emph{minimum} of all spare capacities $\kappa(X;v)$ of (allowed) vertices $v\in X_0^P$ from which $t$ is reachable.
\end{definition}

\begin{remark}
\begin{enumerate}
\item We can restrict attention to vertices $v\in X_0^P$ (Definition \ref{def:X0P}) since the links of all other vertices are contractible (Lemma \ref{lem:contr}).
\item It is possible to remove the doomed region (cf \cite[ch.\ 5]{FGHMR:16}); the set of all elements that cannot be connected to  $t$ by a directed path) from the state space in a first algorithmic step. Only the spare capacities of vertices in the new smaller state space have to be estimated; cf Section \ref{ss:deaddoom} for details. 
\end{enumerate}
\end{remark}

\subsection{The spare capacity as a connectivity indicator}\label{ss:sparecap}
The theorem below formalizes the slogan: If there are no local obstructions to (higher) connectivities of paths spaces, then there are no global ones, either:
\begin{theorem}\label{prop:PVconn}
Given the state space $X$ of a concurrent \emph{(}linear\emph{)} $PV$-program with final state $t$ and a vertex $v_0\in X_0^P$ from which $t$ is reachable. Then the path space $\vec{P}(X)_{v_0}^t$ is $(\kappa(X)-2)$-connected. 
\end{theorem}

\begin{proof}
The proof makes use of the future link $lk^+(X,v_0)$ of a vertex $v_0\in X_0$ viewed as a \emph{poset category}, with inclusion (of index sets; or simplices) as partial order. It proceeds by induction on the $L_1$ (aka taxicab) distance between vertices $v_0$ and $t$ (a non-negative integer!) in the cubical grid; cf Raussen \cite[Sect.\ 2.2]{Raussen:09} in this context. It starts with distance $0$, ie $v_0=t$. Then the path space consists of the constant path only, and it is hence contractible.

Assume by induction that $\vec{P}(X)_v^t$ is $(\kappa(X)-2)$-connected for every vertex $v$ satisfying $v_0<v\le t$, ie $v_0\neq v$ and there exist directed paths from $v_0$ to $v$ and from $v$ to $t$ in $X$.   In the following, we restrict attention to those vertices $v_0<v$ giving rise to a simplex in $lk^+(X;v_0)$, cf Section \ref{ss:fl}; abusing notation, we write $v\in lk^+(X;v_0)$ for these vertices. It is shown in Raussen-Ziemia\'{n}ski \cite[Sect.\ 2.3]{RZ:14}, formulated for past links, and subsequently exploited in Ziemia\'{n}ski \cite{Ziemianski:16} and Belton etal \cite{Beltonetal:19,Beltonetal:21}, that $\vec{P}(X)_{v_0}^t$ is the \emph{colimit} (the union) of certain subspaces $F_v\vec{P}(X)_{v_0}^t, v\in lk^+(X,v_0),$ such that $F_v\vec{P}(X)_{v_0}^t$ is \emph{homotopy equivalent to} $\vec{P}(X)_v^t$. By the induction hypothesis, $F_v\vec{P}(X)_{v_0}^t$ is thus $(\kappa(X)-2)$-connected for each vertex $v$ in the future link $lk^+(X,v_0)$.

The geometric realization of the future link category (aka its nerve) is the future link space $lk^+(X,v_0)$ that is (at least) $(\kappa(X)-2)$-connected by Corollary \ref{cor:linkconn}. 
We apply Bj\"{o}rner's theorem \cite[Theorem 6]{Bjoerner:03} on a colimit of spaces whose connectivity is limited below in a certain pattern. These connectivity conditions are certainly (more than) met. Hence the colimit of the spaces $F_v\vec{P}(X)_{v_0}^t$, and therefore the path space $\vec{P}(X)_{v_0}^t$, has homotopy groups isomorphic to those of the nerve of the (future link) category $lk^+(X;v_0)$ up to dimension $\kappa(X)-2$. In particular, the path space $\vec{P}(X)_{v_0}^t$ is also $(\kappa(X)-2)$-connected (even if $v\not\in X_0^P$; then the future link is contractible by Lemma \ref{lem:contr}). 

Bj\"{o}rner's theorem is formulated for simplicial complexes and their subcomplexes. This restriction is of no concern for us since all occurring path spaces have the homotopy type of CW-complexes, cf Raussen \cite[Prop.\ 3.15]{Raussen:09}, and hence of simplicial complexes, cf Hatcher \cite[Theorem 2C.5]{Hatcher:02}. 
\end{proof}

As a special case, we obtain
\begin{corollary}\label{cor:conn}
Let $X$ denote a Euclidean cubical complex with vertices $v_0\le t\in X_0.$\\
If $\kappa (X)\ge 1$, then $\vec{P}(X)_{v_0}^t$ is non-empty.
If $\kappa(X)\ge 2$, then $\vec{P}(X)_{v_0}^t$ is path-connected. 
\end{corollary} 
\begin{remark}
\begin{enumerate}
\item Corollary \ref{cor:conn} has the following interpretation:\\
If $\kappa (X)\ge 1$, then the state space $X$ does not contain any deadlock (cf Proposition \ref{cor:linkconn}(3)), and every execution at a vertex $v_0\le t$ can terminate at $t$.\\
If $\kappa (X)\ge 2$, then every concurrent execution of individual programs on processors $j\in P$ starting at $v_0$ and ending at $t$ yields the same result, regardless of the order of access to shared resources.
\item The condition $\kappa(X)\ge 2$ in Corollary \ref{cor:conn} is satisfied if and only if for \emph{each} vertex $v\in X_0^P$ one of the following conditions is met:
\begin{itemize}
\item $\exists r\in R(v)$ with $dr(v)\le\kappa (r)-cr(v)$ ($v$ is not contained in the boundary of the state space $X$);
\item $\exists r_1\neq r_2\in R(v)$ with $\kappa(r_i)\ge cr_i(v)+1$ (two resources with non-exhausted capacity at $v$);
\item $\exists r\in R(v)$ with $\kappa (r)\ge cr(v)+2$ (that resource can be accessed by two processors concurrently at $v$).
\end{itemize}
\end{enumerate}
\end{remark}
Proposition \ref{prop:PVconn} and Corollary \ref{cor:conn} are strict in the following sense:

\begin{proposition}\label{prop:mini}
Given the state space $X$ of a PV-program with final state $t$ and a vertex $v_0\in X_0^P$ from which $t$ is reachable. Assume that $\kappa(X;v_0)\le \kappa(X;v)$ for every vertex $v_0\le v\le t$. Then the path space $\vec{P}(X)_{v_0}^t$ is $(\kappa(X;v_0)-2)$-connected but not $(\kappa(X;v_0)-1)$-connected; cf Proposition \emph{\ref{cor:linkconn}}. 
\end{proposition}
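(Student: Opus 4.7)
The plan is to split the claim into the $(\kappa(X;v_0)-2)$-connectivity bound and the strict failure of $(\kappa(X;v_0)-1)$-connectivity, and address them separately. The first is a localisation of Proposition \ref{prop:PVconn}; the second, the principal novelty, goes through a Mayer--Vietoris spectral sequence applied to the cover of the path space that already features in the proof of Proposition \ref{prop:PVconn}.

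For the connectivity bound, I would rerun the induction on $L_1$-distance from $v_0$ to $t$ used to prove Proposition \ref{prop:PVconn}, but with the local invariant $\kappa(X;v_0)$ in place of the global $\kappa(X)$. Since the induction only visits vertices $v$ with $v_0\le v\le t$, the minimality hypothesis gives $\kappa(X;v)\ge\kappa(X;v_0)$ at every such $v$, so by Corollary \ref{cor:linkconn} each future link $lk^+(X,v)$ is at least $(\kappa(X;v_0)-2)$-connected, and Bj\"{o}rner's nerve theorem propagates the bound up to $\vec{P}(X)_{v_0}^t$.

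For the strictness I would argue in homology and then apply Hurewicz. The bound just established makes $\vec{P}(X)_{v_0}^t$ simply-connected as soon as $\kappa(X;v_0)\ge 3$, so it suffices to produce a non-zero class in $H_{\kappa(X;v_0)-1}(\vec{P}(X)_{v_0}^t)$; the edge case $\kappa(X;v_0)=2$ is covered by the same spectral sequence argument (yielding $H_1\ne 0$, hence $\pi_1\ne 0$), while $\kappa(X;v_0)=1$ is handled by noting that pieces indexed by vertices in distinct components of $lk^+(X,v_0)$ have pairwise empty intersections, so $\pi_0(\vec{P}(X)_{v_0}^t)$ surjects onto $\pi_0(lk^+(X,v_0))\ne 0$ by Corollary \ref{cor:linkconn}(4). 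I would set up the Mayer--Vietoris spectral sequence of the cover $\{F_v\vec{P}(X)_{v_0}^t\}_v$ indexed by vertices of $lk^+(X,v_0)$; every non-empty intersection of pieces is homotopy equivalent to a path space $\vec{P}(X)_{v_\sigma}^t$ with $v_\sigma>v_0$, hence $(\kappa(X;v_0)-2)$-connected by the first part. Consequently, on $E^1$ the only non-trivial row in the range $0\le q\le\kappa(X;v_0)-2$ is $q=0$, so $E^2_{p,0}=H_p(lk^+(X,v_0))$, and Corollary \ref{cor:linkconn}(6) supplies $H_{\kappa(X;v_0)-1}(lk^+(X,v_0))\ne 0$.

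The main remaining task is to check that this class survives the differentials. Incoming differentials $d^r:E^r_{\kappa(X;v_0)-1+r,\,1-r}\to E^r_{\kappa(X;v_0)-1,\,0}$ vanish for $r\ge 2$ since $1-r<0$; outgoing differentials $d^r:E^r_{\kappa(X;v_0)-1,\,0}\to E^r_{\kappa(X;v_0)-1-r,\,r-1}$ land in rows $r-1\in[1,\kappa(X;v_0)-2]$ that are already zero on $E^2$ while $r\le\kappa(X;v_0)-1$, and in negative columns once $r\ge\kappa(X;v_0)$. Hence the class persists to $E^\infty$, so $H_{\kappa(X;v_0)-1}(\vec{P}(X)_{v_0}^t)\ne 0$ and Hurewicz completes the argument.
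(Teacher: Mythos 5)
Your argument is correct, and its first half (the induction on $L_1$-distance with the local bound $\kappa(X;v_0)\le\kappa(X;v)$ feeding Corollary \ref{cor:linkconn} and the nerve theorem) is exactly what the paper does. Where you genuinely diverge is in the strictness claim. The paper disposes of it in one line by invoking the full strength of Bj\"{o}rner's Theorem 6: under the connectivity hypotheses on the intersections, that theorem gives not only isomorphisms $\pi_j(\vec{P}(X)_{v_0}^t)\cong\pi_j(lk^+(X,v_0))$ for $j\le\kappa(X;v_0)-2$ but also a \emph{surjection} in dimension $\kappa(X;v_0)-1$, so non-triviality of $\pi_{\kappa(X;v_0)-1}$ of the future link (Corollary \ref{cor:linkconn}(6)) transfers immediately to the path space, with no Hurewicz step and no case split at $\kappa(X;v_0)\le 2$. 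Your Mayer--Vietoris spectral sequence is essentially an unpacking of that surjectivity clause: the vanishing of the rows $1\le q\le\kappa(X;v_0)-2$ on $E^1$, the identification $E^2_{p,0}=H_p(lk^+(X,v_0))$, and the survival of the corner class are exactly the mechanism behind Bj\"{o}rner's statement. What your route buys is self-containedness and an explicit homology class detecting the failure of connectivity; what it costs is the extra bookkeeping (Hurewicz for $\kappa\ge 3$, abelianization for $\kappa=2$, the separate $\pi_0$ argument for $\kappa=1$) and a reliance on the same unproved-in-the-paper facts about the Raussen--Ziemia\'{n}ski cover, namely that the nerve of the vertex-indexed cover \emph{is} the future link and that each non-empty intersection is homotopy equivalent to some $\vec{P}(X)_{v_\sigma}^t$ with $v_0<v_\sigma\le t$; you should also note, as the paper does, that the cover is by subcomplexes of a CW/simplicial model so that the spectral sequence (like Bj\"{o}rner's theorem) actually applies. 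None of this is a gap --- it is a legitimate, slightly longer alternative to the paper's argument.
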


\begin{proof}
As in the proof of Proposition \ref{prop:PVconn}, we may assume inductively that all path spaces $\vec{P}(X)_v^t$ from vertices $v\in lk^+(v_0)$ are at least ($\kappa(X;v_0)-2$)-connected. Hence,
using \cite[Theorem 6]{Bjoerner:03}, $\vec{P}(X)_{v_0}^t$ has the same \emph{non-trivial} homotopy in dimension $\kappa(X;v_0)-1$ as $lk^+(X;v_0)$.
\end{proof}

\begin{corollary}\label{cor:mini}
Given the state space $X$ of a concurrent program and a vertex $v_0\in X$ with \emph{minimal} spare capacity, ie $\kappa(X)=\kappa(X;v_0)$. Then the path space $\vec{P}(X)_{v_0}^t$ is ($\kappa(X)-2$)-connected but not ($\kappa(X)-1$)-connected.
\end{corollary}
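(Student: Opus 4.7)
The plan is to derive this corollary as an immediate specialization of Proposition \ref{prop:mini}. Since $\kappa(X;v_0)$ is only defined for vertices in $X_0^P$, the assumption $\kappa(X)=\kappa(X;v_0)$ implicitly places $v_0\in X_0^P$ with $v_0\le t$; the claim is meaningful only when $t$ is in addition reachable from $v_0$, which we take as implicit. The task then reduces to verifying that the hypothesis of Proposition \ref{prop:mini}, namely $\kappa(X;v_0)\le \kappa(X;v)$ for every vertex $v$ with $v_0\le v\le t$, follows automatically from the global minimality assumption on $v_0$.

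First I would unfold Definition \ref{def:cumcap}: $\kappa(X;v_0)=\kappa(X)=\min_{v\in X_0^P,\ v\le t}\kappa(X;v)$. For any $v\in X_0^P$ with $v_0\le v\le t$ one in particular has $v\le t$, so the minimum immediately yields $\kappa(X;v)\ge \kappa(X)=\kappa(X;v_0)$. For intermediate vertices $v\notin X_0^P$ -- those with at least one coordinate equal to $0$ or of the form $Vr(i_j)$ -- Lemma \ref{lem:contr} guarantees that $lk^+(X,v)$ is contractible. In the inductive, nerve-theoretic argument underlying Proposition \ref{prop:mini}, such vertices contribute contractible pieces to the colimit and impose no new connectivity constraint; equivalently, one may set $\kappa(X;v)=\infty$ at such vertices, so the inequality $\kappa(X;v_0)\le\kappa(X;v)$ holds trivially there.

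With the hypothesis of Proposition \ref{prop:mini} verified, its conclusion delivers exactly what Corollary \ref{cor:mini} asserts: $\vec{P}(X)_{v_0}^t$ is $(\kappa(X)-2)$-connected but not $(\kappa(X)-1)$-connected, the latter non-connectivity being witnessed, via Björner's nerve theorem as in the proof of Proposition \ref{prop:mini}, by the non-trivial homotopy of $lk^+(X,v_0)$ in dimension $\kappa(X)-1$ guaranteed by Corollary \ref{cor:linkconn}(6).

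The main obstacle is not substantive but rather one of bookkeeping: the proposition is formulated under a local minimality assumption along the interval $[v_0,t]$, whereas the corollary is stated in terms of the global minimum over all reachable vertices of $X_0^P$. One must make explicit that the global minimum restricts to the local one and that vertices outside $X_0^P$ within the interval are harmless. Both points are immediate from Definition \ref{def:cumcap} and Lemma \ref{lem:contr}, respectively, so no further geometric input is needed beyond what has already been established.
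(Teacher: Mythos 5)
Your proposal is correct and follows exactly the route the paper intends: the corollary is stated as an immediate specialization of Proposition \ref{prop:mini}, and your verification that global minimality of $\kappa(X;v_0)$ implies the local hypothesis $\kappa(X;v_0)\le\kappa(X;v)$ for $v_0\le v\le t$ (with vertices outside $X_0^P$ handled by Lemma \ref{lem:contr}) is precisely the bookkeeping the paper leaves implicit. No gap; your write-up simply makes explicit what the paper treats as immediate.
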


\begin{corollary}
Given the state space $X$ of a concurrent program and a vertex $v_0\in X_0^P$ with spare capacity $\kappa (X)=\kappa(X;v_0)=1$. Then the path space $\vec{P}(X)_{v_0}^t$ is not path-connected, potentially giving rise to \emph{different} results of a concurrent computation depending on the order of access to shared resources.
\end{corollary}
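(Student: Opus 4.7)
The statement is the $\kappa(X;v_0)=1$ specialisation of Proposition~\ref{prop:mini}: its conclusion ``not $(\kappa(X;v_0)-1)$-connected'' reads ``not $0$-connected'', i.e., not path-connected. Accordingly, my plan is to verify the minimality hypothesis of Proposition~\ref{prop:mini} at $v_0$ and then invoke it.

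First I would check $\kappa(X;v_0)\le\kappa(X;v)$ for every vertex $v$ with $v_0\le v\le t$ in the sense of Definition~\ref{def:cumcap} (i.e., $t$ is reachable from $v$ by a d-path). With $\kappa(X;v_0)=1$ this amounts to $\kappa(X;v)\ge 1$ for all such $v$. By Corollary~\ref{cor:linkconn}(2), $\kappa(X;v)=0$ characterises deadlock vertices, which admit no outgoing d-path and hence cannot be strictly between $v_0$ and $t\ne v$. The only borderline case is $v=t$ itself, for which $\kappa(X;t)$ may formally be $0$ (e.g.\ when $R(t)=\emptyset$); this is harmless, since the induction driving Proposition~\ref{prop:mini} is anchored at $\vec{P}(X)_t^t$, a single point that is trivially $k$-connected for every $k$.

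With the hypothesis secured, Proposition~\ref{prop:mini} then applies and yields the claim. The underlying mechanism, which I would make explicit, is that Corollary~\ref{cor:linkconn}(4) shows $lk^+(X,v_0)$ to be disconnected, and Bj\"orner's nerve theorem \cite[Theorem~6]{Bjoerner:03} applied to the cover of $\vec{P}(X)_{v_0}^t$ by the subspaces $F_v\vec{P}(X)_{v_0}^t$, $v\in lk^+(X,v_0)$, each homotopy equivalent to $\vec{P}(X)_v^t$, transfers this disconnection into $|\pi_0(\vec{P}(X)_{v_0}^t)|\ge 2$, exactly as in the proof of Proposition~\ref{prop:mini}.

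The main obstacle I foresee is the bookkeeping around the cover: only vertices $v\in lk^+(X,v_0)$ from which $t$ is reachable contribute non-empty pieces $F_v\vec{P}(X)_{v_0}^t$, so the nerve invoked in Bj\"orner's theorem is really the ``live'' subcomplex of $lk^+(X,v_0)$, and one must ensure that this subcomplex retains the disconnection furnished by Corollary~\ref{cor:linkconn}(4). A clean way to circumvent the issue is to first delete the doomed region (cf.\ the remark after Definition~\ref{def:cumcap}): the local quantities $R(v_0), cr(v_0), dr(v_0)$ defining $\kappa(X;v_0)$ are unaffected by this reduction, while every vertex in the restricted future link now indexes a non-empty piece of the cover, so the $\pi_0$-disconnection of the future link transfers faithfully to $\vec{P}(X)_{v_0}^t$.
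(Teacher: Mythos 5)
Your proposal is correct and follows the route the paper intends: the statement is the $\kappa(X;v_0)=1$ case of Proposition~\ref{prop:mini} (the paper states it without proof as an immediate specialization), and the conclusion ``not $(\kappa(X;v_0)-1)$-connected'' is exactly ``not path-connected''. Your verification that the minimality hypothesis holds automatically --- since $\kappa(X;v)=0$ characterises deadlocks, which cannot lie strictly between $v_0$ and $t$ --- and your remark about restricting to the ``live'' part of the future link are details the paper leaves implicit, and they are handled correctly.
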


See Figure \ref{fig:disconn} for an illustration. We have more to say on vertices $v\in X_0^P$ ``below $v_0$'' with non path-connected path spaces $\vec{P}(X)_v^t$ in Section \ref{sss:doomdis}.

\begin{center}
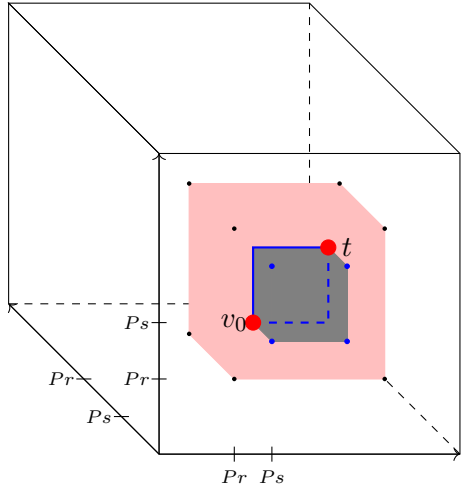
\begin{figure}[h]
\begin{tikzpicture}
\draw (0,0) rectangle (4,4);
\draw [->] (0,0) -- (4,0);
\draw [->] (0,0) -- (0,4);
\draw [->] (0,0) -- (-2,2);
\draw (1,-0.1) -- (1,0.1);
\draw (1.5,-0.1) -- (1.5,0.1);
\draw (-0.1,1) -- (0.1,1);
\draw (-0.1,1.75) -- (0.1,1.75);
\draw (-0.6,0.5) -- (-0.4,0.5);
\draw (-1.1,1) -- (-0.9,1);
\draw (0,4) -- (-2,6) -- (2,6) -- (4,4);
\draw (0,0) -- (-2,2) -- (-2,6) -- (0,4);
\draw [dashed] (4,0) -- (2,2) -- (2,6);
\draw [dashed] (-2,2) -- (2,2);
\draw [fill,color=pink] (1,1) rectangle (3,3);
\draw [fill,color=pink] (1,3) -- (0.4,3.6) -- (2.4,3.6) -- (3,3) -- (1,3);
\draw [fill,color=pink] (1,1) -- (1,3) -- (0.4,3.6) -- (0.4,1.6) -- (1,1);
\draw [fill, color=gray] (1.5,1.5) rectangle (2.5,2.5);
\draw [fill, color=gray] (1.5,2.5) -- (1.25,2.75) -- (2.25,2.75) -- (2.5,2.5) -- (1.5,2.5);
\draw [fill, color=gray] (1.5,1.5) -- (1.5,2.5) -- (1.25,2.75) -- (1.25,1.75) --  (1.5,1.5);
\draw [thick, color =blue] (1.25,1.75) -- (1.25,2.75) -- (2.25,2.75);
\draw [thick, color=blue,dashed] (1.25,1.75) -- (2.25,1.75) -- (2.25,2.75);  
\draw [fill, color=red] (1.25,1.75) circle (0.1cm);
\draw [fill, color=red] (2.25,2.75) circle (0.1cm);
\draw node at (1,1.75) {$v_0$};
\draw node at (2.5,2.75) {$t$};
\draw node at (1,-0.3) {\tiny{$Pr$}};
\draw node at (1.5,-0.3) {\tiny{$Ps$}};
\draw node at (-0.3,1) {\tiny{$Pr$}};
\draw node at (-0.3,1.75) {\tiny{$Ps$}};
\draw node at (-0.8,0.5) {\tiny{$Ps$}};
\draw node at (-1.3,1) {\tiny{$Pr$}};
\fill (1,1) circle (0.03cm);
\fill (1,3) circle (0.03cm);
\fill (3,1) circle (0.03cm);
\fill (3,3) circle (0.03cm);
\fill (0.4,3.6) circle (0.03cm);
\fill (2.4,3.6) circle (0.03cm);
\fill (0.4,1.6) circle (0.03cm);
\draw [fill,color=blue] (1.5,1.5) circle (0.03cm);
\draw [fill,color=blue] (2.5,2.5) circle (0.03cm);
\draw [fill,color=blue] (1.5,2.5) circle (0.03cm);
\draw [fill,color=blue] (2.5,1.5) circle (0.03cm);

\end{tikzpicture}
\caption{A disconnected path space arises for $n=3, R=\{ r, s\}, \kappa (r)=\kappa (s)=2$, and PV-programs $1,3$ starting both by $PrPs$ and $2$ by $PsPr$. The forbidden regions are indicated as $F(r)$ in pink, resp.\ $F(s)$ in grey; visible vertices are indicated by ticks. For the vertex $v_0=(Ps,Pr,Ps)$, we have $ds(v_0)=2, cs(v_0)=1$, and $dr(v_0)=1, cr(v_0)=2$. Hence $\kappa(X;v_0)=2+2-2-1=1$. Path space $\vec{P}(X)_{v_0}^t,\; t=(Vs,Pr,Vs)$ is in fact disconnected; likewise $\vec{P}(X)_{v_0}^{v_1}$  for every vertex $v_1$ with $t\le v_1$.}\label{fig:disconn}
\end{figure}
\end{center}

\begin{example}\label{ex:nonconn} Consider $l<n$ positive integers $\kappa_j<n, 1\le j\le l,$ such that $\sum_1^l\kappa_j>(l-1)n$. We construct a PV program on $n$ processors and $l$ resources $r_j$ with capacities $\kappa(r_j)=\kappa_j$ resulting in a corresponding state space $X$ with $\kappa (X)=\sum_1^l\kappa_j-(l-1)n$. There exist two vertices $v_0, v_1\in X$ such that $\vec{P}(X)_{v_0}^{v_1}$ has first non-trivial homology in dimension $\kappa(X)-1=\sum_1^l\kappa_j-(l-1)n-1$. In particular, $\vec{P}(X)_{v_0}^{v_1}$ is not path-connected for $\sum_1^l\kappa_j=(l-1)n+1$:

Define $\bar{c}_k=\sum_1^k\kappa_j-(k-1)n, \bar{d}_k=n-\bar{c}_k=kn-\sum_1^k\kappa_j$. Remark that \begin{equation}\label{eq:1}\bar{c}_k+\bar{d}_k=n\mbox{ and }\bar{c}_k+\bar{d}_{k-1}=\kappa_k.\end{equation} 
Consider a PV program where each of the $n$ processors individually executes a program of the form $P_{m_1}\dots P_{m_l}V_{m_l}\dots V_{m_1}$ with $(m_1,\dots ,m_l)$ a permutation in $\Sigma_l$, and $P_i$ is short for $Pr_i$. More specifically, they are chosen according to the following pattern (column $j$ corresponds to the $j$-th $P$ command in each thread; cf Figure \ref{fig:cap}): $P_1,\dots ,P_{l-1}$ are successively filled into the columns, from left to right in Figure \ref{fig:cap}, until the capacity $\kappa_j$ of resource $r_j$ is exhausted. The slots in the last column $l$ are occupied by the single $P$ command that does not occur in the particular program of that processor in previous columns.

\begin{figure}[h]
\begin{tikzpicture}
\draw (0,0) -- (0,5) -- (1,5) -- (1,0) -- (0,0);
\draw (0,1) -- (1,1);
\draw (1,5) -- (2,5) -- (2,0) -- (1,0);
\draw (1,2) -- (2,2); 
\draw (4,0) -- (4,5) -- (5,5) -- (5,0) -- (4,0);
\draw (4,4) -- (5,4);
\draw (5,5) -- (6,5) -- (6,0) -- (5,0);
\draw (5,4) -- (6,4);
\draw (5,3) -- (6,3);
\draw (5,1) -- (6,1);
\draw (5,2) -- (6,2);
\node at (0.5,4.5) {$P_1$};
\node at (0.5,2.5) {{\tiny $\bar{c}_1$}};
\node at (0.5,0.7) {$P_2$};
\node at (0.5,0.2) {{\tiny $\bar{d}_1$}};
\node at (1.5,4.5) {$P_2$};
\node at (1.5,2.5) {{\tiny $\bar{c}_2$}};
\node at (1.5,1.7) {$P_3$};
\node at (1.5,0.2) {{\tiny $\bar{d}_2$}};
\node at (3,2.5) {\dots};
\node at (4.5,4.7) {$P_{l-1}$};
\node at (5.5,4.7) {$P_l$};
\node at (4.5,4.2) {{\tiny $\bar{c}_{l-1}$}};
\node at (5.5,4.2) {{\tiny $\bar{c}_{l-1}$}};
\node at (4.5,3.7) {$P_l$};
\node at (4.5,0.2) {{\tiny $\bar{d}_{l-1}$}};
\node at (5.5,0.7) {$P_1$};
\node at (5.5,0.2) {{\tiny $n-\kappa_1$}};
\node at (5.5,3.7) {$P_{l-1}$};
\node at (5.5,3.2) {{\tiny $n-\kappa_{l-1}$}};
\node at (5.5,1.7) {$P_2$};
\node at (5.5,1.2) {{\tiny $n-\kappa_2$}};
\node at (0.5,5.2) {$1$};
\node at (1.5,5.2) {$2$};
\node at (3,5.2) {\dots};
\node at (4.5,5.2) {$l-1$};
\node at (5.5,5.2) {$l$};
\node at (5.5,2.5) {\dots};
\end{tikzpicture}\caption{Start of a PV program on $n$ processors (their programs occur horizontally; only $P$ commands are shown) and $l$ resources.}\label{fig:cap}
\end{figure}
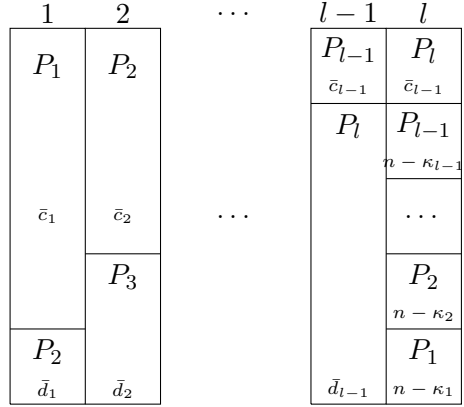

Let $v_0=(l-1,\dots ,l-1)$ denote the vertex corresponding to the final $P$ commands, ie after column $l-1$. Then $cr_i(v_0)=\kappa (r_i), i<l$: these capacities are exhausted. Moreover, $cr_l(v_0)=\bar{d}_{l-1}$, and hence $\kappa(X;v_0)=\kappa_l-\bar{d}_{l-1}=\bar{c}_l=\sum_1^l\kappa_l-(l-1)n$ by (\ref{eq:1}). The result follows from Proposition \ref{prop:mini} and Corollary \ref{cor:mini} since all vertices below $v_0$ have higher capacities and since future links of all vertices above $v_0$ until the final vertex $v_1=(2l+1,\dots ,2l+1)$ are contractible, cf Lemma \ref{lem:contr}.
\end{example}

\subsection{Spare capacity in the presence of crashes}
What happens to the spare capacity if one (or several) of the processors crashes during a computation? Potential crashes play an important role in the analysis of the resilience of a concurrent program in Distributed Computing \cite{HKR:14}.

For simplicity of notation, we assume that the \emph{last} processor $n$ crashes. First of all, state space is reduced: The interval $[0,l(k_n)+1]$ is reduced to $[0,C]$ with $0<C<l(k_n)+1$. If $\pi_n: X\to\mb{R}$ denotes projection to the last component, then the smaller state space is $X_C=\pi_n^{-1}([0,C])\subset X=\pi_n^{-1}([0,l(k_n)+1])=X$. This makes all vertices $v$ with $v_n>C$ irrelevant. On the other hand, new vertices on the \emph{upper boundary} $x_j=C$ appear. We let $B$ denote the largest integer strictly smaller than $C$  and such that $B\in Pr([1:k_n])$ for some $r\in R$ -- ie the command issued at $x_n=B$ is $Pr$. We compare the connectivities of future links at vertices $v_C=(v_1,\dots ,v_{n-1},C)\in (X_C)_0^P$ and $v_B=(v_1,\dots ,v_{n-1},B)\in X_0^P$.

\begin{lemma}
If $lk^+(X,v_B)$ is $k$-connected, then $lk^+(X_C,v_C)$ is at least $(k-1)$-connected.\\ If the future link at $v_C$ is not $k$-connected, then $lk^+(X,v_B)$ is not $(k+1)$-connected and the crash happens before resource $r$ was relinquished again.
\end{lemma}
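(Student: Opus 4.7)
The plan is to apply Proposition~\ref{prop:linkjoin} to rewrite both $lk^+(X,v_B)$ and $lk^+(X_C,v_C)$ as joins of skeleta of simplices, isolate the single factor of the join that is altered by the disappearance of processor $n$, and quantify the change using two classical facts: the $s$-skeleton of a $(d-1)$-simplex is $(s-1)$-connected, contractible exactly when $d\leq s+1$ and a wedge of $s$-spheres otherwise; and $\mathrm{conn}(A*B)=\mathrm{conn}(A)+\mathrm{conn}(B)+2$.

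Writing $d=dr(v_B)$ and $c=cr(v_B)$, I would factor out the $r$-contribution as
\[
lk^+(X,v_B)\cong \Delta^{d-1}_{\kappa(r)-c-1}\,*\,J_B,\qquad lk^+(X_C,v_C)\cong L_C\,*\,J_C,
\]
where $J_B,J_C$ are the joins over the remaining resources $R(v_B)\setminus\{r\}$. Since each processor issues at most one $P$-request at a time, $R(v_B)\setminus\{r\}$ and all the corresponding values $dr'$ are unchanged upon removing $n$. Between integer positions $B$ and $C$ processor $n$ can issue no $P$-command (by the maximality of $B$), so $cr'(v_C)\leq cr'(v_B)$ for every $r'\neq r$; each $r'$-factor at $v_C$ is thus the same simplex with a no-smaller skeletal cut-off, and the first fact gives $\mathrm{conn}(J_C)\geq\mathrm{conn}(J_B)$.

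For the $r$-factor, removing $n$ from $P(v_C)$ reduces $d$ to $d-1$. In Case~1, where the crash occurs after $r$ has been relinquished, $cr(v_C)=c$ and $L_C=\Delta^{d-2}_{\kappa(r)-c-1}$; in Case~2, where $n$ crashes while still holding $r$, $cr(v_C)=c+1$ and $L_C=\Delta^{d-2}_{\kappa(r)-c-2}$. A direct check from the first fact shows that Case~1 never decreases the connectivity of the $r$-factor (it is unchanged when $d-1>\kappa(r)-c$ and jumps to contractible otherwise), while Case~2 decreases it by exactly one precisely when the original $r$-factor was non-contractible and leaves it contractible otherwise. Combining these with $\mathrm{conn}(J_C)\geq\mathrm{conn}(J_B)$ and the join-connectivity formula yields $\mathrm{conn}(lk^+(X_C,v_C))\geq\mathrm{conn}(lk^+(X,v_B))-1$, with a genuine drop by exactly one occurring only in Case~2 and only when the $r$-factor at $v_B$ carries the top-dimensional non-vanishing homology of the whole link; this last condition translates into $lk^+(X,v_B)$ being $k$-connected but not $(k+1)$-connected.

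The main bookkeeping obstacle is the degenerate case $d=1$, in which the $r$-factor disappears entirely at $v_C$. I would treat it separately: when $\kappa(r)=c$ the original factor was already empty and the convention $A*\emptyset=A$ makes the $r$-slot irrelevant in both links; when $\kappa(r)>c$ the original factor is a single point, so $lk^+(X,v_B)$ is the cone over $J_B$ and the inequality reduces to an elementary comparison between $J_B$ and $J_C$ in the regime where $v_B$ is actually a critical vertex, which is the natural setting for the lemma.
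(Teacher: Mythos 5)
Your argument is correct and is essentially the paper's own proof made explicit: the paper compares $dr$ and $cr$ at $v_B$ and $v_C$ and then invokes Definition~\ref{def:cc} (hence Corollary~\ref{cor:linkconn}), which amounts to exactly the factor-by-factor analysis of the join decomposition of Proposition~\ref{prop:linkjoin} that you carry out. If anything you are more careful on two points the paper glosses over: you allow $cr'(v_C)\le cr'(v_B)$ for $r'\ne r$ (processor $n$ may relinquish other resources between $B$ and $C$) where the paper asserts equality, and you isolate the degenerate case $dr(v_B)=1$, $\kappa(r)>cr(v_B)$, in which $lk^+(X,v_B)$ is a cone (spare capacity $\infty$) while $lk^+(X_C,v_C)$ need not be contractible, so the stated inequality can genuinely fail there --- a case the paper's short proof, which only tracks finite spare capacities, does not address.
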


\begin{proof}
In view of Proposition \ref{cor:linkconn}, we compare the spare capacities and thus the consumption functions at $v_B$ and $v_C$. For a resource $r'\neq r$, we have $cr'(v_C)\le cr'(v_B)$ (the resource might have been relinquished between $B$ and $C$). On the other hand, $cr(v_B)\le cr(v_C)\le cr(v_B)+1$ (depending on whether $r$ was relinquished between $B$ and $C$). As a consequence (cf Definition \ref{def:cc}), $\kappa (X;v_C)\ge\kappa (X;v_B)-1$. Equality can only hold if $cr(v_C)=cr(v_B)+1$, ie, if resource $r$ has not been relinquished between $B$ and $C$. 
\end{proof}

This result allows us to compare the spare capacity $\kappa(X)$ of a concurrent program without crash with the spare capacity $\kappa(X_C)$ of that program with a crash at $x_n=C$ (used as estimate for connectivities of path spaces, cf Section \ref{ss:sparecap}):

\begin{proposition}\label{prop:vertexcrash}
The spare capacity of $X_C$ is at least $\kappa(X_C)\ge\kappa(X)-1$.\\ If it is less than $\kappa(X)$, then there exists a vertex $v_B=(v_1,\dots , v_{n-1},B)\in X_0^P$ with $\kappa (X)=\kappa (X;v_B)$, ie of minimal spare capacity, a resource $r$ and an integer $l\le k_n(r)$ such that $Pr(l)=B$ \emph{(}the command issued at $x_n=B$ is of type $Pr$\emph{)} and $Vr(l)>C$ \emph{(}ie that resource would be relinquished after the crash\emph{)}.
\end{proposition}

\begin{example}
Consider the following simple example resulting in a drop of spare capacity and hence connectivity after a crash: Three processors compete for a single use of one resource $r$ of capacity $\kappa (r)= 2$ resulting in spare capacity $\kappa(X)=\kappa(X;v_B)=2-0=2$. If one of the processors crashes while having acquired a lock, the remaining two processors compete at $v_C$ with $\kappa(X_C)=\kappa(X_C;v_C)=2-1=1$, cf Figure \ref{fig:crash}.

\begin{figure}[h]
\begin{tikzpicture}[scale=0.7]
\draw[fill,color=pink] (-1.3,1.3) rectangle (4.4,7);
\draw (0,0) rectangle (5.7,5.7);
\draw (0,5.7) -- (-2,7.7) -- (3.7,7.7);
\draw (4.4,1.3) -- (5.7,0) -- (5.7,5.7) -- (3.7,7.7) -- (3.7,7); 
\draw[dashed] (3.7,7) -- (3.7,2);
\draw[dashed] (3.7,2) -- (4.4,1.3);
\draw[dashed] (-2,2) -- (3.7,2);
\draw (0,0) -- (-2,2) -- (-2,7.7);
\draw[fill,color=gray] (1.5,1.2) rectangle (3.5,4.8);
\draw[fill,color=gray] (1.5,4.8) -- (0.9,5.4) -- (2.9,5.4) -- (3.5,4.8) -- (1.5,4.8);
\draw[fill,color=gray] (3.5,1.2) -- (3.5,4.8) -- (2.9,5.4) -- (2.9,1.8) -- (2.9,1.2);
\draw [fill,color=gray] (1.5,1.2) -- (0.9,1.8) -- (0.9,5.4) -- (1.5,4.8) -- (1.5,1.2); 
\draw [fill, color=red] (1.5,1.2) circle (0.1cm);
\draw [fill, color=red] (0.9,1.8) circle (0.1cm);
\draw node at (1,1.2) {$v_B$};
\draw node at (0.3,1.8) {$v_C$};
\draw node at (-1.3,1.3) {$C$};
\end{tikzpicture}
\caption{Spare capacities -- with $X_C$ in front of the crash wall (in pink): $\kappa(X;v_B)=2, \kappa(X_C;v_C)=1$.}
\label{fig:crash}
\end{figure}
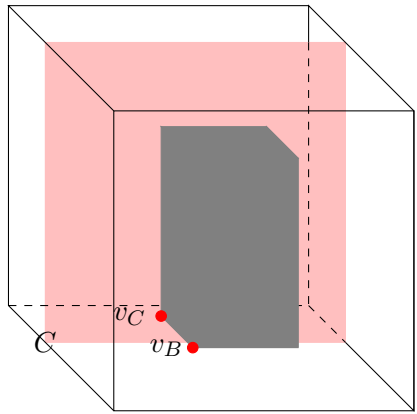
\end{example}

\begin{remark}
In distributed networks, individual processors are usually not aware that one of their partners has crashed. If crashes are known (for a resource), then locks acquired by the crashed processor can be deleted, possibly increasing spare capacities.\\
Remark that crashes also restrict the possible target vertices!
\end{remark}

\subsection{Programs with branches and loops}\label{ss:braloo}
Allowing branches and loops in concurrent programs changes the set-up, but not dramatically. Every single processor $p_i$ proceeds then along (the geometric realization of) a \emph{directed graph} $G_i$ instead of along an interval. A concurrent execution corresponds to a directed path in the product $\prod_1^n G_i$ of graphs from which certain forbidden regions have to be removed.

The space of directed paths in any of the graphs $G_i$ from source to target is homotopy discrete: Every connected component of the path space can be represented by a constant speed path $p_i$ with a directed interval within $G_i$ as range; other directed paths in this component are reparametrizations of the given one. The space of directed paths in $\prod_1^n G_i$ (without considering conflicting resource consumption) is a disjoint union of the spaces of directed paths corresponding to $n$-tuples of such components, represented by $n$-tuples of paths $(p_1,\dots ,p_n)$ -- an ``unfolding''. The space of directed paths corresponding to a particular unfolding can then be described via directed paths in a \emph{product of intervals}, and analysed  as in the previous sections. We are led to capacities and connectivity bounds that will often differ among the unfoldings, ie among the homotopy classes of directed paths in $\prod G_i$. 

\section{Algorithmics}
\subsection{Notation}\label{ss:notat}
The start data consist of a $PV$ program on $n$ threads and $l$ resources $r^i\in R$.
For  a non-empty subset $R'\subseteq R$, let 
\begin{itemize}
\item $P^j_{R'}(X):=\{l\in [1:l(j)]+1|\; \exists r\in R', i\in [1:k_j(r)]|\; l=Pr_j(i)\}\cup\{ l(j)+1\}\subset [1:l(j)+1]$ denote the subset of $P$-calls from $j\in P$ to a resource $r\in R'$ including the final position; with predecessor function\\ $p^j_{R'}: P^j_{R'}\to P^j_{R'}\cup\{ 0\},\; p^j_{R'}(k)=\max\{ l\in P^j_{R'}\cup\{ 0\}|\; l<k\}$;
\item $V^j_{R'}(X):=\{l\in [1:l(j)]|\; \exists r\in R', i\in [1:k_j(r)]|\; l=Vr_j(i)\}\subset [1:l(j)]$ denote the subset of $V$-calls from $j\in P$ to a resource $r\in R'$; with successor function\\ $s^j_{R'}:P^j_{R'}\to V^j_{R'},\; s^j_{R'}(k)=\min\{ l\in V^j_{R'}|\; k<l\}$.
\end{itemize}

We introduce the following integer (sub-)lattices in $\mb{R}^P=\mb{R}^n$:
\begin{itemize}
\item $L(X):=\prod_{j\in P}[0:l(j)+1]$;
\item $P_{R'}(X):=\prod_{j\in P}P^j_{R'}(X)\subset \bar{P}_{R'}(X):=\prod_{j\in P}(P^j_{R'}(X)\cup\{ 0\})\subset L(X)$; with predecessor function $p_{R'}: P_{R'}(X)\to\bar{P}_{R'}(X),\; p_{R'}([v_1,\dots ,v_n])=[p^1_{R'}(v_1),\dots ,p^n_{R'}(v_n)]$;
\item $V_{R'}(X):=\prod_{j\in P}V^j_{R'}(X)\subset L(X)$; with successor function $s_{R'}: P_{R'}(X)\to V_{R'}(X),$\\ $s_{R'}([v_1,\dots , v_n])=[s^1_{R'}(v_1),\dots ,s^n_{R'}(v_n)]$;
\end{itemize}

Capacities $\kappa (r^i)$ of individual resources $r^i\in R$ are collected in an $l=|R|$-dimensional \emph{capacity vector} $\bm{\kappa}=[\kappa r^1,\dots ,\kappa r^l]\in\mb{N}^l$.
To a grid vector $\mb{v}=[v_1,\dots ,v_n]\in P(X)=P_R(X)$, associate resource \emph{consumption vectors} 
\begin{itemize}
\item $\mb{c}(\mb{v})=[cr^1(\mb{v}),\dots ,cr^l(\mb{v})]\in\mb{N}^l_{\ge 0}$ (resource consumption ``at/just before'' $\mb{v}$)
\item $\mb{d}(\mb{v})=[dr^1(\mb{v}),\dots ,dr^l(\mb{v})]\in\mb{N}^l_{\ge 0}$ (``new'' locks asked for at $\mb{v}$; all $dr^i(\mb{v})\ge 0$; often $\sum_1^l dr^i(\mb{v})=n$).
\item $\mb{l}(\mb{v})=[l^1(\mb{v}),\dots ,l^l(\mb{v})]=\mb{c}(\mb{v})+\mb{d}(\mb{v})$ (resource consumption ``right after'' $\mb{v}$)
\end{itemize}
For a non-empty subset $R'\subset R$, the projection $\mb{Z}^R\to\mb{Z}^{R'}$ maps $\bm{\kappa}, \mb{c}(\mb{v}), \mb{d}(\mb{v})$ to $\bm{\kappa}_{R'}, \mb{c}_{R'}(\mb{v})$ and $\mb{d}_{R'}(\mb{v})$.
This is particularly relevant for the subset $R'=R(\mb{v}):=\{ r\in R|\; dr(\mb{v})>0\}\subseteq R$, the set of resources actually called for at a vertex $\mb{v}$. 

\subsection{Determining spare capacities algorithmically}\label{ss:sparecapalg}
In this section, we will only need an implementation of the grid/array $P(X):=P_R(X)\subset L(X)$. We assume throughout that vectors, including vector addition and dot product, are implemented on that entire array.
\subsubsection{Calculations required for a single processor}
To initialize, consider just a single processor $j\in P$ : We defined in Section \ref{ss:consump} the difference function $dr_j: [0:l(j)+1]\to\{ 0, 1, -1\}$ for every resource $r\in R$. Combined over all resources $r\in R$, they define a difference vector function $\mb{d}_j: [0:l(j)+1]\to\{ 0, 1, -1\}^R$. It can be read from the command line for processor $j$ in $l(j)$ steps resulting in $|R|$-dimensional vectors $\mb{d}_j(k),\; 0\le k\le l(j)+1$. If only one call is issued at every node, then $\mb{d}_j$ takes signed \emph{unit} vectors as values; at the ends $0$ and $l(j)+1$, it takes the fixed value $\mb{0}$.  

Resource consumption $cr_j: [0:l(j)]\to \{ 0,1\}$ (interpreted as Boolean values) is calculated inductively: There are only two cases in which $cr_j(k-1)\neq cr_j(k)$: That happens if $dr_j(k-1)=1$ and $dr_j(k)=0$ (flipping $cr_j$ from $0$ to $1$) or if $dr_j(k-1)=0$ and $dr_j(k)=-1$ (flipping $cr_j$ from $1$ to $0$); equivalently, if $dr_j(k-1)-dr_j(k)=1$. Taking these difference calculations and conditional flips over all resources $r\in R$  and determining the entire resource consumption function $\mb{c}_j: [0:l(j)]\to \{ 0,1\}^R$ takes thus $2l(j)$ steps.

As a result, establishing the resource consumption functions $\mb{c}_j$ for every processor $j\in P$ requires in total 3$\sum_1^nl(j)$ steps.

\subsubsection{Array calculations: spare capacities of vertices}\label{sss:sparecapvert}

In the next steps, calculate the vector functions $\mb{d}: P(X)\to\{ 0,1\}^R$ by $\mb{d}(\mb{v})=\sum_{j\in P}\mb{d}_j(v_j)$ and $\mb{c}: P(X)\to (\mb{N}_{\ge 0})^R$ by $\mb{c}(\mb{v})=\sum_{j\in P}\mb{c}_j(v_j)$. Both are defined on the restricted grid $P(X)$ only and hence $\mb{d}$ -- with information on ``new calls'' $Pr$ -- takes only values $0$ or $1$ at each component. Each of these calculations requires $n+1=|P|+1$ steps: After initializing the entire array with $0$-vectors, the \emph{same} vector functions $\mb{d}_j$, resp.\ $\mb{c}_j,\; j\in P$, are added to all cells (in different ``directions'' $j\in P$, of course). 

Next, calculate the difference $\bm{\kappa}-\mb{c}: P(X)\to\mb{Z}^R$ of the fixed capacity vector $\bm{\kappa}\in\mb{N}^R$ and the capacity function $\mb{c}$. Every vertex $\mb{v}$ for which $\bm{\kappa}-\mb{c}(\mb{v})$ has at least one negative component ($|R|$ comparison steps) belongs to the forbidden region and is flagged.

Finally, spare capacities for (non-flagged) vertices $\mb{v}$ are determined by one dot product operation $\bm{\kappa}(X;\mb{v})=\mb{d}(\mb{v})\cdot (\bm{\kappa}-\mb{c}(\mb{v}))\in\mb{Z}$ on the entire array $P(X)$. Under our assumptions, the total number of steps is thus \emph{linear} in the number of $PV$-steps on each of the processors in $P$ and on the number of resources in $R$. 

\subsubsection{Determining the spare capacity of a complex associated to a $PV$ program}
In a final round, probe successively equations $\bm{\kappa}(X;\mb{v})=k$ starting with and increasing from $k=0$ on the (non-flagged) vertices of the entire array. The minimal $k$ for which an equation $\bm{\kappa}(X;\mb{v})=k$ yields the answer \texttt{true} for some vertex $\mb{v}\in P(X)$ corresponds to the spare capacity $\kappa(X)$ of the state space. The number of steps needed is at most $n$ for a $PV$ program on $n$ processors: Future links are all contained in an $(n-1)$-simplex. If spare capacities are at least $n-1$ at \emph{every} vertex, then these future links are full simplices, and all path spaces are contractible.


\subsection{Deadlocks and doomed regions}\label{ss:deaddoom}
If a PV program leads to one or several deadlocks (ie to a vertex $v$ with $\kappa (X;v)=0$, cf. Proposition \ref{cor:linkconn}(3), then the spare capacity of the entire program $\kappa (X)$ vanishes as well, by definition. But it may be important to determine the spare capacity of the state space that arises outside the ``doomed regions'' (no directed path from there to the top vertex) associated with them. 

In the remaining two sections, particular consideration will be given to vertices $v$ with spare capacity $\kappa(X;v)=0$ (ie deadlocks) and those with $\kappa(X;v)=1$ (ie with disconnected future link $lk^+(X,v)$). 

\subsubsection{Deadlock detection}\label{sss:dead} Deadlock detection for PV programs was provided in detail in Fajstrup etal.\ \cite{FGR:98} only in the case where all par\-ti\-ci\-pating resources $r\in R$ have capacity $\kappa (r)=n-1=|P|-1$. In a way, the general case, with other and variable capacities, could still be handled, since the forbidden region $F(r)$ associated to a resource $r$ of smaller capacity can be modelled as the union of many resources of capacity $n-1$. 
But it is preferable to give a formulation for deadlocks in the general case, with resources of various capacities participating: In view of Proposition \ref{cor:linkconn}(3), a vertex $v\in P(X)$ is a deadlock, if 
\begin{itemize}
\item $\mb{c}(v)-\bm{\kappa}\ge\mb{0}$ (ie $v$ is not contained in the forbidden region)
\item $\mb{c}_{R(v)}(v)=\bm{\kappa}_{R(v)}$ (All resources asked for at $v$ have been locked already before up to full capacity; none of the processors can advance).
\end{itemize}
A deadlock at $v$ -- detected via its capacity $\kappa (X;v)=0$ as in Section \ref{sss:sparecapvert} -- comes thus with the following data:
\begin{itemize}
\item A subset $R'\subseteq R$ of resources (ie $R'=R(v))$;
\item For each $r\in R'$, a subset $C(r)\subseteq P$ of locking processors of \emph{cardinality} $|C(r)|=\kappa (r)$ and a non-empty subset $D(r)\subseteq P\setminus C(r)$ of its complement such that $\bigsqcup_{r\in R'} D(r)=P=[1:n]$ is a partition of $P$ (ie the $D(r)$ are disjoint);
\item Every processor $j\in C(r),\; r\in R',$ has delivered a call $Pr$ in front of and still active at $v_j$ (ie $r\in R', j\in C(r) \Rightarrow cr_j(v_j)=1,$ hence $cr(v)\ge\kappa (r)$);
\item For every $j\in D(r),\; r\in R'$, a call $Pr$ is issued at $v_j$ (ie $dr_j(v)=1, cr_j(v)=0$, and hence $cr(v)=\kappa (r),\; r\in R'$);
\item For every $r'\in R\setminus R'$, at most $\kappa (r')$ calls $Pr'$ are active at $v$ (ie $cr'(v)\le\kappa (r'),\; r'\in R\setminus R'$).
\end{itemize}
The last requirement makes sure that $v$ is not a forbidden vertex.
\subsubsection{Doomed regions}\label{sss:doomed}
Compare with Fajstrup etal \cite{FGR:98} (where these regions are called ``unsafe'') and Fajstrup etal \cite{FGHMR:16}.

Let $v$ denote a deadlock vertex with resource calls to $R(v)\subseteq R$ and predecessor vertex $w:=p_{R(v)}(v)$, cf Section \ref{ss:notat}. Remark that 
$\mb{l}_{R(v)}(w)=\mb{c}_{R(v)}(w)+\mb{d}_{R(v)}(w)=\mb{c}_{R(v)}(v)=\bm{\kappa}_{R(v)}$. Hence, every resource $r\in R(v)$ is locked by $\kappa (r)$ processors within the  hyperrectangle $D(v):=]w,v]=\prod_{j=1}^n]v_j,w_j]$ spanned by $w$ and $v$, and no directed path can leave $D(v)$.  

One may eliminate this ``primary'' \emph{doomed region} $D(v)$ from the state space $X$ by a modification of the original $PV$-program: Add an extra resource $\bar{r}$ of capacity $n-1$ and, for each $j\in P$, calls $P\bar{r}$ at predecessors $w_j=p_{R(v)}^j(v_j)$ to be relinquished by calls $V\bar{r}$ at the successors $s_{R(v)}^j(v_j)$. Then $D(v)=F(\bar{r}):=]w,x],\; x=s_{R(v)}(v)$, becomes part of the forbidden region of the modified program -- but path spaces with target not included in $D(v)$ remain unchanged!

Adding $F(\bar{r})=]w,x]$ to the forbidden region $F$, one can, in the same way as described in \cite{FGR:98}, inductively define \emph{higher order} doomed regions: With the updated capacity consumption, new deadlocks may arise at the intersection of the boundaries of the doomed region $F(\bar{r})$ and the original forbidden region $F$. Modifying the recursive algorithm from \cite{FGR:98}, one obtains a program that is \emph{deadlockfree and with literally the same path spaces as before} - if just the target is not contained in any of the doomed regions (from which it cannot terminate correctly).

\subsection{Disconnected futures}
\subsubsection{Vertices with disconnected future links}\label{sss:disconn}
As a consequence of Corollary \ref{cor:linkconn}(4), a vertex $v$ has a disconnected non-empty future link $lk^+(X;v)$ if and only if
\begin{enumerate}
\item $\mb{c}(v)-\bm{\kappa}\ge\mb{0}$ (ie $v$ is not contained in the forbidden region);
\item $\mb{c}_{R(v)}(v)-\bm{\kappa}_{R(v)}$ is a standard unit vector $\mb{e}_{r^0}$ with $r^o\in R(v)$ (with a single coordinate $1$, all others $0$);
\item $dr^0(v)>1$ (at least two calls $Pr^0$ at $v$).
\end{enumerate}
Such a \emph{critical vertex} $v$ can be characterized by the following data (this is just a small variation compared to the characterization of deadlocks in Section \ref{sss:dead}): 
\begin{itemize}
\item A subset $R'\subseteq R$ of resources including a particular element $r^0\in R'$ (ie $R'=R(v)$);
\item For each $r\in R'$ a subset $C(r)\subseteq P$ of processors such that $|C(r^0)|=\kappa (r^0)-1, |C(r)|=\kappa (r), r\in R'\setminus\{r^0\}$, and a non-empty subset $D(r)\subseteq P\setminus C(r)$ of its complement such that $|D(r^0)|\ge 2$ and $\bigsqcup_{r\in R'} D(r)=P=[1:n]$ is a partition of $P$ (ie the $D(r)$ are disjoint);
\item Every processor $j\in C(r),\; r\in R',$ has delivered a call $Pr$ in front of and still active at $v_j$ (ie $ r\in R' \Rightarrow cr_j(v_j)=1$, hence $cr(v)\ge\kappa (r), r\neq r^0, cr^0(v)\ge\kappa (r^0)-1$);
\item For $j\in D(r)$, a call $Pr$ is issued at $v_j$ (ie $dr_j(v)=1, cr_j(v)=0$, hence $cr^0(v)=\kappa (r^0)-1,$ $cr(v)=\kappa (r),\; r\neq r^0$);
\item For every $r'\in R\setminus R'$, at most $\kappa (r')$ calls $Pr'$ are active at $v$ (ie $cr'(v)\le\kappa (r'),\; r'\in R\setminus R'$).
\end{itemize}

\subsubsection{Doomed region for disconnectivity}\label{sss:doomdis}
Analogous to the doomed region $D(v)$ below a deadlock vertex $v$ from Section \ref{sss:doomed}, there is a \emph{critical} region $D^1(v)$ below a vertex $v$ with disconnected future link such that path spaces $\vec{P}(X)_y^t$ are \emph{disconnected} for $y\in D^1(v)$:

 Let $v$ denote a vertex satisfying the conditions in Section \ref{sss:disconn}, with resource calls to $R(v)\subseteq R$ and predecessor vertex $w=p_{R(v)}(v)$. Define $D^1(v):=]w,v]$. Note that a directed path can leave this hyperrectangle only through a hyperplane $x_i=v_i$ for $i\in C(r^0)$; across all other upper boundary hyperplanes $x_j=v_j, j\not\in C(r^0)$, it would enter the forbidden region. Moreover, such a directed path can enter $x_i>v_i$ for \emph{only one} $i\in C(r^0)$ -- but not both $x_i>v_i$ and $x_j>v_j,\; i\neq j,$ -- without entering the forbidden region.

 Let $x=s_{R(v)}(v)\in V_{R(v)}(X)\subset L(X)$ denote the successor vertex of $v$ with respect to $R'=R(v)$; cf Section \ref{ss:notat}. The intersection of the state space $X$ with $]w,x[\setminus ]w,v]$ -- that every directed path from $D^1(v)$ needs to enter -- has the form $\coprod_{i\in C(r^0)}(]v_i,x_i]\times\prod_{j\neq i}]w_j,v_j[).$ Remark that the subspaces in that disjoint union are not connected to each other: \emph{exactly one} coordinate is larger than $v_i$. 

As in the case of a doomed region for a deadlock, one may eliminate $]w,x[$ -- and hence the critical region $D^1(v)$ -- from the state space by adding an additional resource of capacity $n-1$ which is locked, for each processor $j\in P$, at $w_j$ and relinquished at $x_j$. The arising new state space may contain further deadlocks at the intersection of old and new forbidden regions: the associated doomed regions consist of those points $u$ such that every directed path starting at $u$ needs to pass through the critical region $D^1(v)$ -- with disconnected path spaces (with source $u$) as a consequence; these new doomed regions are \emph{higher order critical regions} with respect to the vertex $v$. Eliminating \emph{all} critical regions and associated higher order critical regions results in a state space $\tilde{X}\subseteq X$ with all spaces of directed paths between vertices being path-connected.

\subsubsection{Estimation of the number of path components}
Mutually reachable critical vertices (with spare capacity $1$), or rather their future links allow determining an upper bound to the number of path components of the space of directed paths between vertices: For every critical vertex $c\in X_0$, consider the connected components $c_i$ of its future link (there are at most $n$ of them) and the partial order relation $\preceq$ given by reachability (within $X$) between components of future links (each of them representing an edge) of various critical vertices. A (possibly empty) chain of components of future links (between  a given source $s$ and target vertex $t$) can be realized by a directed path since reachability was assumed. It is known that every directed path in $\vec{P}(X)_s^t$ is d-homotopic to a \emph{tame} directed path with source $s$ and target $t$ (Ziemia\'{n}ski \cite[Prop.\ 6.28]{Ziemianski:12},\cite[Theorem 5.6]{Ziemianski:17}, Raussen \cite[Theorem 2.6]{Raussen:21}) that can only transit from one cube to another at a vertex. If a tame path enters a critical region $D^1(v)$, it has to leave it at its top vertex $c$ and then along one of the 1-cubes (edges) $c_i$.

Moreover, two directed paths realizing a chain of components (no other critical vertices and critical regions involved!) are d-homotopic to each other. This can be seen by a minor modification of the proof of Corollary \ref{cor:conn}: At every critical vertex, only one of the possible future components is allowed. Excluding deadlocks and non-selected components of future links, all remaining vertices have a spare capacity at least $2$. 

Hence, the number of path components of $\vec{P}(X)_s^t$ can be estimated (from above) by the number of chains described above. If this number is not too large, the possible outcomes of \emph{all} executions can thus be determined by running one execution along every such chain. 


\end{document}